\newtheorem{theorem}{Thereom}
\newtheorem{lemma}{Lemma}
\newtheorem{claim}{Claim}
\newcommand{\E}{\mathbb{E}} 
\newcommand{\R}{\mathbb{R}}
\newcommand{\BF}[2]{\mbox{\textbf{BF}}(#1,#2)}
\begin{document}

\title{Accelerating Nearest Neighbor Search on Manycore Systems}

\numberofauthors{1}
\author{
\alignauthor
Lawrence Cayton\\
\affaddr{Max Planck Institute}\\
\affaddr{T\"ubingen, Germany}\\
\email{lcayton@tuebingen.mpg.de}
}

\maketitle
\begin{abstract}


We develop methods for accelerating metric similarity search that are 
effective on modern hardware.  Our algorithms factor into easily
parallelizable components, making them simple to deploy and efficient 
on multicore CPUs and GPUs.  Despite the simple structure of our algorithms,
their search performance is provably sublinear in the size of the
database, with a factor dependent only on its \emph{intrinsic} dimensionality.  
We
demonstrate that our methods provide substantial speedups on a range
of datasets and hardware platforms.  In particular, we present
results on a 48-core server machine, on graphics hardware, and on a
multicore desktop.  
\end{abstract}

\section{Introduction}
We study methods to accelerate nearest neighbor search on modern
parallel systems.  We work in the standard metric nearest neighbor (NN)
setting: given a database $X$, return the closest point to any query
$q$, where closeness is measured with some fixed metric.  Though the
problem setting is by now well-studied, there are two recent developments
that provide a different focus for this work.  The first development
comes from studies in data analysis and advances in theoretical
computer science, where the notion of intrinsic dimensionality has
been refined and profitably exploited to manage high-dimensional data.
The second development comes from the computer hardware industry: CPUs
are virtually all multicore now and GPUs have rapidly evolved into
powerful secondary devices for numerical computation, forcing a
renewed interest in parallelism.  

High-dimensional data challenges nearly all methods for
accelerating NN search.  Unfortunately, complicated, high-dimensional
data is the norm in many domains that rely on NN 
search, such as computer vision \cite{trf.2008}, bioinformatics
\cite{b.2001}, and data analysis in general \cite{d.00}.  Because
such data must be dealt with, researchers have explored data
properties that might be exploited to accelerate search.  A
compelling property is the intrinsic dimensionality 
of a data set; the idea is that often data only \emph{appears}
high-dimensional (i.e., each element has many features), but is
actually governed by only a few parameters.  In recent years, this
type of data has been studied extensively and is now believed to be
widespread; in machine learning, for example, several methods have
been developed to reveal the intrinsic structure of data
\cite{rs.00,tdsl.00}.  In research on accelerated NN retrieval, a
renewed focus on intrinsic dimensionality in recent years has yielded
methods with strong theoretical guarantees \cite{KR02,KL04} and state
of the art empirical performance \cite{BKL06}.  

The end goal of all of this work on NN is, of course, to make NN
search run fast.  While properties of data are important for
computational efficiency, equally important is the machine hardware on
which search actually runs.  Hardware properties are especially
important now, as the machines in use for everyday data analysis and
database operations are fundamentally different than they were 
even a decade ago.  In particular, standard computer chips are
multicore with vector (SIMD) units, and GPUs have become popular as 
secondary devices for data-intensive processing.  These modern
processors have very impressive computational capabilities, but
fully exploiting it requires considerable parallelism.  


The shift towards parallelism in hardware necessitates a refocusing in
algorithm design and software engineering methods \cite{p.10}.  At the
level of algorithm design, advances in algorithms and data structures
may not be useful unless they can be effectively parallelized.  At the
level of software development, developing efficient implementations on 
parallel systems is challenging, so parallel primitives and design
patterns are required to ease the burden on programmers.  For the case
of NN search, the modern search algorithms discussed were developed
for sequential systems, and seem quite difficult to deploy on modern
hardware.  This is a major practical limitation.  


In this work, we develop an new approach to NN search that is both
provably sensitive to intrinsic structure \emph{and} that is effective
on modern CPUs and GPUs.  Our algorithms are based on fundamental
ideas from metric similarity search, and are designed carefully for
two important goals.  First, our design choices allow us to rigorously bound
the search complexity  of our methods.  Equally important, these
choices allow us to factorize our algorithms into a basic primitive
that is simple to parallelize, making them effective and relatively
easy to implement on different systems.  

Our methods are of immediate practical use.  We demonstrate their
performance benefits on a range of modern platforms: a 48-core server
machine, a GPU, and a multicore desktop.  

\section{Background and Related Work}



The focus of the present work is unique; as far as we know, it is the
only work to simultaneously make use of modern algorithmic
developments and modern hardware.  Still, the ideas 
behind our methods are based on a substantial amount of previous work:
the algorithmic ideas are based on techniques from metric similarity
search; the results on intrinsic dimensionality are related to 
ideas developed mostly in the theory and data analysis communities;
and the motivation behind the project comes from recent trends in
computer hardware, especially as related to databases and scientific
computation.     

The data structure and algorithm in this work are based on two
fundamentals of algorithms for metric data: space decomposition and
the triangle inequality.  These pillars are used in virtually all work
on metric NN search; see the surveys of Ch\'{a}vez et al.\ and Clarkson
for detailed overviews \cite{cnbm.01,C06}.  Two of the most
empirically effective structures are AESA \cite{V86} and metric ball
trees \cite{O89, Y93}, both of which have spawned many relatives.  

A long-standing problem in similarity search is the difficulty of 
dealing with high-dimensional data; see \cite{AMNSW98,b.95,wsb.98} and
the above surveys.  The basic challenge is that space-decomposition
structures that reduce the work for NN retrieval seem to have
performance that scales exponentially with the
dimensionality of the data, rendering them useless to all but the
smallest problems.  Within the last two decades there have been two
very promising directions of work that attempt to deal with the
problem of high-dimensional data.   

The first is called Locality Sensitive Hashing (LSH) \cite{indyk98}.
LSH has retrieval performance that is provably sublinear,
independent of the underlying dimensionality.  This was a major
theoretical breakthrough and the data structure has been successfully
deployed on some tasks (e.g. \cite{b.2001}).  However, LSH has some
limitations: it can only provide approximate answers, it is defined
only for particular distance measures (not at the generality of
metrics), and setting the parameters correctly can be complex \cite{d.2008}.

The second line of work, upon which we build, is based on the
notion of \emph{intrinsic dimensionality}.  The basic idea here is
that many data sets only appear high-dimensional, but are actually
governed by a small number of parameters.  Within data
analysis and machine learning, the idea of low-dimensional intrinsic
structure has become extremely popular and such structure is believed
to be common in many data sets of interests \cite{rs.00,tdsl.00}.

This idea has also been explored in the context of NN search.  A
variety of slightly different notions of metric space dimensionality
capture this intuition.  One which has recently resulted in strong
theoretical and empirical results is the \emph{growth dimension} 
or \emph{expansion rate}  \cite{c.99,KR02}, which we define formally later.
This notion of dimension lead to the development of the Cover Tree
\cite{BKL06,KL04}, which we return to momentarily.  Though the
notion has some idiosyncrasies \cite{KL04}, the impressive empirical
performance of the Cover Tree suggests that it is a useful notion.  

Perhaps the two most relevant methods for NN search are the Cover Tree
and the GNAT of Brin \cite{b.95}; let us distinguish this research from the
present work.  The GNAT uses a simple space decomposition based on
representatives from the database, much as we do, and also discusses
the idea of intrinsic dimensionality.  However, the relationship of
the GNAT's search performance and the intrinsic dimensionality is only
discussed in an informal, heuristic way, whereas we give rigorous
runtime guarantees.  These rigorous bounds require a search algorithm
that is different than that of the GNAT.  Additionally,
parallelization is not discussed in \cite{b.95}.

The Cover Tree has rigorous guarantees on the query time dependent on
the expansion rate and has empirical performance that is state of the
art.  Even so, our algorithms, data structure, and theory are substantially
different; in particular, the Cover Tree is a deep tree and is explored
in a conditional way that seems quite difficult to parallelize.  We
compare our method with the Cover Tree in the experiments.

Lastly, we touch on the major inspiration for this paper: the use
of hardware to accelerate data-intensive processes.  Impelled by the
sudden ubiquity of multicore CPUs and the development of GPUs for
general-purpose computation, this area of research has exploded in the
last decade; let us provide a couple of inspiring examples.  A relatively early
work develops methods to off-load expensive database operations onto
the GPU \cite{glwlm.04}.  A very recent piece of work tunes  
basic tree search algorithms (such as for index lookup) to be
effective on modern multicore CPUs and GPUs \cite{kcss.10}.  Finally,
another paper suggests simply running brute force search on a GPU to
accelerate NN search \cite{bdhk.06}; this simple approach provides a
surprising amount of acceleration over computation on sequential CPUs 
\cite{gdb.08}.  


\section{The brute force primitive}
Brute force search requires a high amount of work, but parallelizes
effectively.  In contrast, most accelerated approaches to NN search
reduce the work, but seem difficult to parallelize.  Our approach to
NN search takes the advantages of both: it reduces the work \emph{and}
parallelizes effectively.  We achieve this combination by structuring
our algorithms as multiple (actually two) brute force searches, each of which
considers only a small portion of the database.  In this section, we
formalize the \emph{brute force primitive} and discuss the
parallelization of it. 


Given a set of queries $Q$ and a database $X$ with $n$ elements,
finding the NNs for all $q\in Q$ can be achieved by a series of linear
scans.  For each query $q$, the distance between $q$ and each $x\in X$
is computed, the distances are compared, and the database point that 
is closest is returned.  We denote this subroutine as $\BF{Q}{X}$.  If
$L$ is some sets of IDs (i.e. $L\subset \{1,\ldots,n\}$), then brute
force search between $Q$ and this subset of the database is denoted
$\BF{Q}{X[L]}$.  

The work required for $\BF{Q}{X}$ is $O(n)$; we later prove that our
search algorithms have work only roughly $O(\sqrt{n})$, which is
performed in two brute force calls $\BF{Q}{X[L_1]}$ and
$\BF{Q}{X[L_2]}$, where lists $L_1$ and $L_2$ are determined by our
algorithm.  

Parallelizing $\BF{Q}{X}$ is straightforward.  We break the procedure
down into two steps: a distance computation step, and a comparison
step.  In the distance computation step, all pairs of distances are
computed.  This has virtually the same structure as matrix-matrix
multiply, and hence block decomposition approaches are effective.  In
the case where there is only a single query presented at a time
(e.g. a stream of queries), the distance computation step of
$\BF{q}{X}$ has the structure of a matrix-vector multiplication.  In
both cases, the parallelization is extremely well-studied.  

The second step is the comparison: for each query, the distances must
be compared, and the nearest database element returned.  This is
simple to do in parallel systems as well; the problem can simply be
plugged into the standard parallel-reduce paradigm where comparisons
are made according to an inverted binary tree.  

The computational structure of accelerated NN data structures is quite
different.  For simplicity, let us take metric trees as an example
\cite{O89}.  Querying this data structure requires a depth-first
search of a deep tree.  This process involves an interleaved series
of distance computations, bound computations, and distance
comparisons.  Moreover, the computation is conditional: the specific 
calculations made at one step depend on the comparisons from the
previous step.  Distributing the work for this process across
processors is a significant challenge.  Additionally, because of the
interleaved nature of the computation, attaining near full-utilization
of the hardware is difficult; in contrast, matrix-matrix multiply is
commonly used to demonstrate the capability of processors.
Finally, the conditional nature of the computation makes execution on
vector hardware with limited branching abilities---namely,
GPUs---inefficient.   

Efficient NN routines seem to depend on a complex computational
structure; a major contribution of the present work is in
demonstrating that a much simpler structure can be substituted without
significant loss.  

With the brute force primitive in place, we proceed to discuss our
data structure and algorithms, all of which will be built from this
primitive.  
\section{Data structure}
We discuss the data structure underlying our methods in this section,
which we call the \emph{Random Ball Cover} (RBC).  This is a very
simple, single-level cover of the underlying metric space.  The basic
idea is to use a random subset of the database as representatives for
the rest of the DB.  The details of this structure differ slightly for
our two different search algorithms, which will be introduced in the next
section.  

The database is denoted $X=\{x_1,\ldots, x_n\}$ and the metric in use
is denoted $\rho(\cdot, \cdot)$.  The data structure consists of a random
subset of the database, which will generally be of size about
$O(\sqrt{n})$;  we make this precise later.  This set of random
representatives will be denoted $R$.  It is built by choosing each
element of the database independently at random with probability
$n_r/n$, where the exact value of $n_r$ is discussed in the theory
section.  In expectation, there will be $n_r$ representatives
chosen---one can think of the symbol $n_r$ as shorthand for
\textbf{n}umber of \textbf{r}epresentatives.   

Each representative \emph{owns} some subset of the database.  The
list of points that a representative $r$ owns is denoted $L_r$.  In
the exact search algorithm, $L_r$ contains all $x\in X$ for which $r$
is the nearest 
neighbor among $R$.  In the one-shot algorithm, $L_r$ contains a
suitably sized set of $r$'s NNs among $X$.  We will at times refer to
$L_r$ as an \emph{ownership list}.  

Along with each representative, a radius is also stored.  This radius
is defined as the distance from $r$ to the furthest point that it
owns:
\begin{align*}
\psi_r = \max_{x\in L_r} \rho(x,r).
\end{align*}

The focus of this paper is on algorithms that are simple to
parallelize; this is achieved by folding the computational work into
brute force calls.  The building routines demonstrate this principle
concisely.  The building routine for the exact search algorithm must
find the NN for each $x\in X$ among the representatives $R$.  Thus
this routine is simply a call to $\BF{X}{R}$.  Similarly, the building
routine for the one-shot algorithm must find the NN(s) for each
representative $R$ among the database elements $X$; thus this
procedure is simply a call to $\BF{R}{X}$.  Both parallelize easily.

With the data structure and notation in place, we proceed to describe
the search algorithms.  
\section{Search Algorithms} \label{sec:algs}
In this section, we describe two search algorithms which use the RBC
data structure.  The theoretical analysis of 
these algorithms appears in the next section.  Both of these
algorithms build up from the brute force search primitive.

We first describe the \emph{one-shot} algorithm, then the \emph{exact}
search algorithm.  The one-shot algorithm is 
extraordinarily simple, the exact algorithm only slightly less so.
Both algorithms rely on a randomized data structure, and so are
probabilistic.  In the one-shot algorithm, the solution itself is
randomized: the data structure returns a correct result with high
probability.  In the exact algorithm, the solution is guaranteed to be
correct; only the running time is probabilistic.\footnote{This
  algorithm can be easily modified so that it only guarantees an
  \emph{approximate} nearest neighbor, which reduces search time.}
Hence when an exact answer is required, the exact algorithm is
appropriate; if a small amount of error can be tolerated, the one-shot
algorithm is simpler and often faster, as we show in the experiments.  

We focus on the problem of 1-NN search throughout; the extensions to
$k$-NN and $\epsilon$-range search are straightforward.  

\subsection{One-shot search}
First, recall the data structure built for the one-shot search
algorithm.  The representatives are chosen at random, and each list
$L_r$ contains the $s$ closest database elements to $r$.  Depending on
the setting of $s$, points will typically belong to more than one
representative.  We discuss these parameters further in the theory
section.  

On a query $q$, the algorithm proceeds as follows.  It first computes
the NN to $q$ among the representatives using a simple linear scan
(brute force search), call it $r$.  It then scans the ownership list
$L_r$, computing the distance from $q$ to each listed database point.
The nearest one is returned as the nearest neighbor.  

We restate the algorithm in terms of the brute force primitive.  The
algorithm first calls $\BF{q}{R}$, which returns a 
representative $r$.  It then calls $\BF{q}{X[L_r]}$ and returns the answer.  

The one-shot algorithm is almost absurdly simple.  Yet, as we show
rigorously in the theory section, it provides a massive speedup; in
particular, it reduces the work from $O(n)$ (required for a full brute
force search) to roughly $O(\sqrt{n})$.  Moreover, it is very fast
empirically, as we show in the experiments section.  

\subsection{Exact search}
Whereas the one-shot algorithm does not use the triangle inequality
(though the analysis requires it), the exact search algorithm
explicitly prunes portions of the space using it; in this sense, it is
reminiscent of classic branch-and-bound techniques.  

Recall that the data structure built for the exact search algorithm is 
slightly different from the one for the one search algorithm (the
reason for the difference will become clear in theory section).  The 
build algorithm calls $\BF{X}{R}$, then adds each $x\in X$ to the
ownership list of its returned NN in $R$.  The radii $\psi_r$ are set
to $\max_{x\in L_r} \rho(x,r)$ as before.  

We now detail the search algorithm.  
First, the closest point to $q$ among all $r\in R$ is computed; call
it $r_q$, and let $\gamma = \rho(q,r_q)$.  This distance is an upper
bound on the distance to $q$'s NN (since $r_q \in X$), and so the
algorithm can use it to discard some of the database from consideration.
Recall that the radius of each representative $r$ is stored as
$\psi_r$---\textit{i.e.} each $x \in L_r$ satisfies $\rho(x,r) \leq
\psi_r$.  Because $\gamma$ is an upper bound on the distance to the
NN, any point belonging to an $r$ satisfying
\begin{align}
  \rho(q,r) \geq \gamma + \psi_r \label{ineq:prune}
\end{align}
can be discarded.  The following sketch illustrates
(\ref{ineq:prune}); since there is a point within distance $\gamma$ of
$q$, no points within distance $\psi_r$ of $r$ can be $q$'s NN. 
\begin{center}
\includegraphics[width=.45\linewidth,height=!]{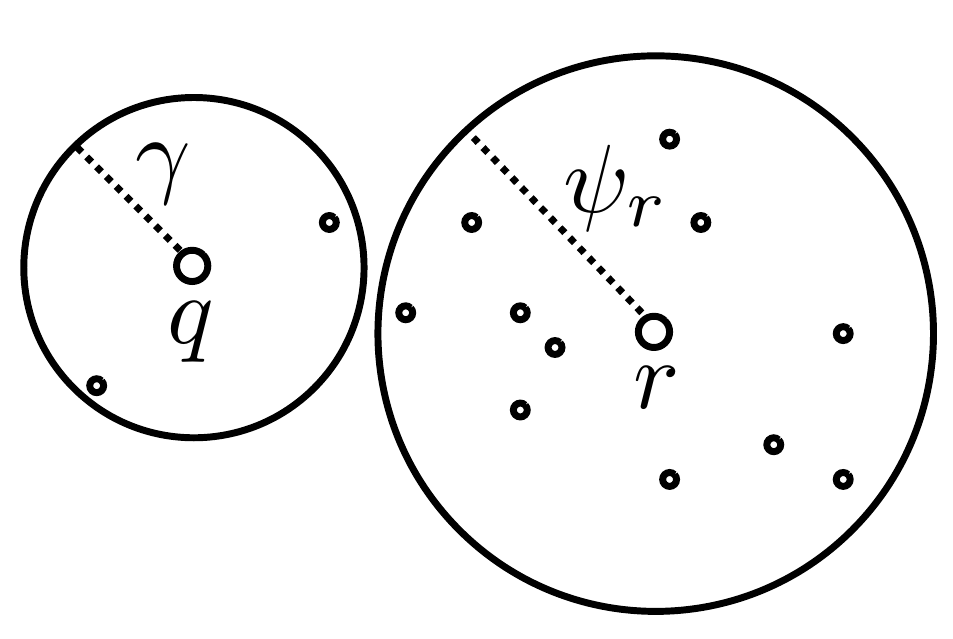}
\end{center}
Hence the only points that need to be considered belong to a list
$L_r$, where $r$ violates (\ref{ineq:prune}).  

The algorithm simultaneously checks a second bound in hopes of pruning
out more of the representatives.  As Lemma \ref{lem:prune2} (see the
next section) shows, if $r_q$ is $q$'s NN among the 
representatives, any representative that owns $q$'s NN must satisfy 
\begin{align}
\rho(q,r) \leq 3\cdot \rho(q,r_q). \label{ineq:prune2}
\end{align}
Hence any representative violating this inequality is pruned out by
the search algorithm.  

Once the pruning stage is complete, the search algorithm computes the
distance to all points belonging to one of the remaining
representatives, and returns the nearest.

We restate the algorithm in terms of our primitive.  It first computes
$\BF{q}{R}$, much like the one-shot algorithm.  In this case, however,
the distances must be retained so that inequalities
(\ref{ineq:prune}) and (\ref{ineq:prune2}) can be checked.  Once
the inequalities are checked, some representatives will still remain,
with lists $L_1, \ldots, L_t$.  Next the search algorithm performs another
brute force search, namely $\BF{q}{X[L_1\cup L_2 \cup \ldots \cup
    L_t]}$, and returns the answer.  

We will see that the size of each of the brute force calls is 
about $O(\sqrt{n})$, providing major time savings over a full brute
force search.  

We emphasize that the computation structure of both search algorithms
is quite different from tree-based search, in which bounds are
incrementally refined, and distance computations are interleaved with
bound evaluations.  In both cases, this structure makes the algorithm
extremely simple to implement and effective to parallelize.  It is
rather surprising that such simple algorithms can effectively reduce
the work required for NN search, but that is exactly what we show both 
theoretically and empirically in the following sections.

\section{Theory} \label{sec:theory}
As we described in the background section, all methods that reduce the
work for NN search have some dependence on the dimensionality of the
database.  Much of the success of metric indexing methods is commonly
ascribed to their dependence  only on the intrinsic dimensionality of
data.  In this section, we prove that the RBC search algorithms scale
with the \emph{expansion rate}, which is a useful notion of intrinsic
dimensionality.   

\begin{definition} Let $B(x,r)$ denote the closed ball of radius $r$
  around $x$---i.e. the set $\{y\;:\; \rho(x,y)\leq r\}$.  A
  finite metric space $M$ has expansion rate $c$ if for all $r>0$ and $x\in M$
\begin{align*}
|B(x,2r)| \leq c \cdot |B(x,r)|.
\end{align*}
\end{definition}
To gain some intuition for this measure, consider a grid of points in
$\R^d$ under the $\ell_1$ metric
\begin{align*}
\rho(x,y) = \sum_{i=1}^d |x_i-y_i|.
\end{align*}
The expansion rate in this case is $2^d$, hence $\log c$ corresponds
to the dimensionality of the data \cite{KR02}.  Notice that the
expansion rate is defined only in terms of the metric, not in terms of
the representation of data; in this sense, the rate captures the intrinsic
structure of the metric space. Notice also that the expansion rate is
defined for arbitrary metric spaces, so makes sense for the edit
distance on strings and the shortest path distance on the nodes of a
graph, for example.   

Throughout we assume that $X \cup Q$ has expansion rate $c$, and we
prove bounds dependent on this expansion rate and $n$.  

The exact search algorithm and analysis rely on the following lemma,
which is known \cite{C06}.  See Appendix \ref{ap:lem-proof} for the proof.
\begin{lemma}
Let $R \subset X$ and assign each $x\in X$ to its nearest $r\in R$.
Let $\gamma = \min_{r\in R} \rho(q,r)$ (i.e., $\gamma$ is the distance
to $q$'s NN in $R$).  Then, if some $r^*\in R$ owns the nearest neighbor
to $q$ in $X$, it must satisfy
\begin{align*}
\rho(q,r^*) \leq 3\gamma.
\end{align*} \label{lem:prune2}
\end{lemma}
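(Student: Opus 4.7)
The plan is to bound $\rho(q,r^*)$ by chaining two triangle inequalities through the actual nearest neighbor $x^*$ of $q$ in $X$. The key observation is that $\gamma$ simultaneously upper-bounds two quantities: (i) the distance from $q$ to $x^*$, because $r_q \in R \subseteq X$ is itself a candidate NN for $q$ in $X$, so $\rho(q,x^*) \le \rho(q,r_q) = \gamma$; and (ii) the distance from $x^*$ to $r_q$ via the triangle inequality.

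Concretely, I would first let $x^*$ denote $q$'s NN in $X$ and record the inequality $\rho(q,x^*) \le \gamma$ from the observation above. Next, I would use the ownership assumption: since $r^*$ owns $x^*$, it is the nearest representative to $x^*$, so $\rho(x^*,r^*) \le \rho(x^*,r_q)$. Bounding the right-hand side by the triangle inequality gives $\rho(x^*,r_q) \le \rho(x^*,q) + \rho(q,r_q) \le \gamma + \gamma = 2\gamma$, and hence $\rho(x^*,r^*) \le 2\gamma$.

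Finally, one more triangle inequality yields
\begin{align*}
\rho(q,r^*) \le \rho(q,x^*) + \rho(x^*,r^*) \le \gamma + 2\gamma = 3\gamma,
\end{align*}
which is the desired bound.

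There isn't really a hard step here; the only thing one has to be careful about is not confusing the roles of $r_q$ (the representative closest to $q$) and $r^*$ (the representative that owns $x^*$). The argument does not use the expansion rate at all, which matches the fact that the lemma is a generic metric-space statement used later to justify the pruning rule~(\ref{ineq:prune2}).
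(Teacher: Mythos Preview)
Your argument is correct and matches the paper's proof essentially line for line: bound $\rho(q,x^*)\le\gamma$ using $r_q\in R\subset X$, use ownership to get $\rho(x^*,r^*)\le\rho(x^*,r_q)\le 2\gamma$ via the triangle inequality, and then apply the triangle inequality once more to obtain $\rho(q,r^*)\le 3\gamma$. There is nothing to add.
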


We now analyze the search algorithms.  Throughout, we assume $R$ is a
random subset of $X$, built by picking each element of $X$
independently at random with probability $n_r/n$.  Sometimes we will
denote this probability as $p$.  Recall that the ownership list of
$r\in R$ is denoted $L_r$ and the radius of this list
(i.e. $\max_{x\in L_r} \rho(x,r)$) is denoted $\psi_r$.  Finally,
$n_r$ is the expected number of representatives and $n$ is the
cardinality of the database.
 
\subsection{Exact Search}
First, let us consider the exact search algorithm.  The search
algorithm performs two steps: in the first step, the algorithm
performs brute force search from the queries to $R$; and in the second
step, it performs a brute force search from the queries to
the database elements belonging to ownership lists of un-pruned
representatives.  The first step clearly has work complexity $O(n_r)$
per query, where $n_r$ is the expected number of representatives; the
following analysis bounds the complexity of the second step.  In
particular, we show  that the expected number of distance evaluations
is $c^3 n/n_r$. 
Hence if $n_r \approx c^{3/2}\sqrt{n}$, the expected number of distance
evaluations in the second step is $O(c^{3/2} \sqrt{n})$, the same as the
first step.  We call $n_r = O(c^{3/2}\sqrt{n})$ the \emph{standard
  parameter setting}.  


In the first step of the algorithm, the nearest point to $q$ in $R$ is
found; call this point $r_q$.  How many database points are likely to be
closer to $q$ than $r_q$?  

\begin{claim} Let $\gamma$ be the distance from $q$ to its nearest
  neighbor in $R$, $r_q$.  
  The expected number of points in $B(q,\gamma)$ is $n/n_r$, which
  is  $O(\sqrt{n}/c^{3/2})$ for the standard parameter setting.  \label{cl:numPts}
\end{claim}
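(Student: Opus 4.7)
The plan is to observe that the bound reduces to a straightforward geometric-distribution calculation once the points of $X$ are re-indexed by their distance to $q$. The key insight is that although $\gamma$ itself is random (it depends on which database points landed in $R$), the \emph{identity} of the candidates for being closer to $q$ than $r_q$ is completely determined by the fixed geometry of $X \cup \{q\}$, and only the membership of $R$ supplies the randomness.

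First I would order the database as $x_{(1)}, x_{(2)}, \ldots, x_{(n)}$ in non-decreasing order of $\rho(q, x_{(i)})$, breaking ties arbitrarily. Since each $x_{(i)}$ is independently included in $R$ with probability $p = n_r / n$, the nearest representative $r_q$ is exactly $x_{(K)}$, where $K$ is the index of the first element of this ordering that lands in $R$. Next I would argue that the set of database points inside $B(q,\gamma)$ is $\{x_{(1)}, \ldots, x_{(K)}\}$: each earlier $x_{(i)}$ has distance at most $\gamma = \rho(q, x_{(K)})$ by construction, while every later $x_{(j)}$ has distance at least $\gamma$; ties at the boundary only change the count by a bounded amount and leave the bound intact.

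It then remains to compute $E[K]$. Because the inclusion trials along the ordering are independent Bernoulli($p$), $K$ is a geometric random variable truncated at $n$, so $E[K] \leq 1/p = n/n_r$. Substituting the standard parameter setting $n_r = O(c^{3/2} \sqrt{n})$ immediately gives $n/n_r = O(\sqrt{n}/c^{3/2})$, finishing the claim. The only subtleties are the negligible event that $R$ is empty (probability $(1-p)^n$, on which $\gamma$ is undefined) and distance ties at the boundary of $B(q, \gamma)$; neither affects the upper bound, so I do not expect any real obstacle here beyond writing these edge cases down cleanly.
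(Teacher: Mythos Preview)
Your proposal is correct and matches the paper's proof essentially line for line: both order $X$ by distance to $q$, observe that the index of the first representative is geometric with parameter $p=n_r/n$, and read off the mean $1/p=n/n_r$. Your version is in fact slightly more careful (noting the truncation at $n$, the empty-$R$ event, and boundary ties), but none of this departs from the paper's argument.
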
 
\begin{proof}
Form a list $L = [x_1,x_2, \ldots,x_n]$ by ordering the database points
$x\in X$ by their distance to $q$.  Some subset of $X$ also belongs to
$R$; let $x_t$ be the first representative appearing in $L$ (i.e. the
closest representative to $q$).  Then the expected number of points in
$B(q,\gamma)$ is equal to $t-1$.

A slightly different way to view the process is that the $L$ is fixed,
then $x_1$ is chosen as a representative with probability $n_r/n$,
then $x_2$ is chosen as a representative with probability $n_r/n$, and
so on.  We wish to know the expected time before the first $x_i$ is
chosen as a representative.  That is given precisely by the geometric
distribution: the number of Bernoulli trials needed to get one
success.  The mean of a Bernoulli distribution with parameter $p =
n_r/n$ is $1/p = n/n_r$.  Hence $\E|B(q,\gamma)| = n/n_r$, which is 
$O(\sqrt{n}/c^{3/2})$ for the standard parameter setting.  
\end{proof}
We note that a high-probability version of the above claim follows
easily from standard concentration bounds.  We also point
out that the expectation is over randomness in the algorithm; we are
not making any distributional assumptions on the database.

After computing the nearest neighbor to the query $q$ among the
representatives, the exact search algorithm uses $\gamma$ $(\equiv
\rho(q,r_q))$ as an upper bound on the distance to $q$'s NN to prune
out some representative sets.  In particular, any representative $r$ with
radius $\psi_r$  satisfying
\begin{align}
\rho(q,r) \geq \gamma + \psi_r \label{ineq:pruneTheory}
\end{align} 
cannot possibly own $q$'s NN.  Additionally, the algorithm can safely
prune out any representative $r$ such that 
\begin{align}
\rho(q,r) > 3 \gamma \label{ineq:pruneTheory2}.
\end{align}
This property follows from Lemma \ref{lem:prune2}.  In the following
we only work with inequality (\ref{ineq:pruneTheory2}).  The simultaneous
use of both inequalities improved the empirical performance, but it is
not necessary for the following theory. 

We now estimate how many representatives \emph{violate}
(\ref{ineq:pruneTheory2}) and bound how many points these
representatives own.  We show that all examined database points belong to a
ball $B(q,7\gamma)$, which we subsequently bound the cardinality of.  

\begin{claim}
The nearest neighbor of $q$ lies inside of the ball
$B(q,7\gamma)$. \label{cl:ballBnd}  
\end{claim}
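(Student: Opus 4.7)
The plan is essentially immediate once one unwinds the definitions. Let $x^*$ denote the true nearest neighbor of $q$ in $X$. Since $r_q \in R \subseteq X$, the optimality of $x^*$ forces $\rho(q,x^*) \le \rho(q,r_q) = \gamma$, so $x^* \in B(q,\gamma) \subseteq B(q,7\gamma)$. The entire proof is this one line; no appeal to the triangle inequality, the pruning inequalities, or the expansion rate is required for the statement as written.

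The reason to frame the claim with the loose constant $7\gamma$ rather than the tight $\gamma$ is clearly strategic: the \emph{next} step in the analysis will bound the cardinality of $B(q,7\gamma)$ by the expansion rate, and this envelope must be wide enough to contain not only $x^*$ but every database point the algorithm actually inspects. Any examined $x$ belongs to $L_r$ for some $r$ surviving (\ref{ineq:pruneTheory2}), so $\rho(q,r) \le 3\gamma$; the triangle inequality then gives $\rho(q,x) \le 3\gamma + \psi_r$. Once one argues $\psi_r \le 4\gamma$ for unpruned $r$ — the slightly delicate piece, which uses that the farthest point $r$ owns is at least as close to $r$ as to $r_q$, combined with $\rho(x,r_q) \le \rho(x,q)+\gamma$ and the condition $\rho(q,r) \le 3\gamma$ — every examined $x$ sits inside $B(q,7\gamma)$. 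The present claim simply pins down the envelope; the stronger ``all examined points'' version would require this extra triangle-inequality push, and it is a minor obstacle to write down cleanly under only (\ref{ineq:pruneTheory2}).

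The specific constant $7$ is chosen because $7\gamma \le 2^3\gamma$: three applications of the expansion property $|B(\cdot,2r)| \le c\,|B(\cdot,r)|$ give $|B(q,7\gamma)| \le c^3\,|B(q,\gamma)|$, which together with the $\E|B(q,\gamma)| = n/n_r$ from Claim \ref{cl:numPts} recovers exactly the $c^3 n/n_r$ work target stated at the start of the section. The real work in the surrounding argument lies in that subsequent cardinality translation, not here; for this claim itself there is essentially nothing to do beyond citing $r_q \in X$ and the definition of $\gamma$.
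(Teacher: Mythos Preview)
Your one-line proof of the claim as literally stated is correct: $\rho(q,x^*)\le\gamma$ since $r_q\in X$, and that already places $x^*$ in $B(q,7\gamma)$. The paper takes a longer route because, as you correctly diagnose, the \emph{proof} of the claim is really establishing the operational content used immediately afterward---that every point the algorithm needs to inspect lies in $B(q,7\gamma)$.

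Your sketch of that stronger fact, however, has a genuine gap. You propose to show $\psi_r\le 4\gamma$ for every representative $r$ surviving inequality~(\ref{ineq:pruneTheory2}), but this is false in general: a representative with $\rho(q,r)\le 3\gamma$ may still own points arbitrarily far from $q$, since nothing in~(\ref{ineq:pruneTheory2}) ties $\psi_r$ to $\gamma$. The chain you outline---the farthest owned point $x$ is closer to $r$ than to $r_q$, and $\rho(x,r_q)\le\rho(x,q)+\gamma$---leaves $\rho(x,q)$ unbounded for an arbitrary $x\in L_r$, so no bound on $\psi_r$ follows. The paper does \emph{not} bound $\psi_r$. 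Instead it shows only that the nearest neighbor $x^*$ satisfies $\rho(x^*,r^*)\le\rho(x^*,q)+\rho(q,r^*)\le\gamma+3\gamma=4\gamma$ for its owning representative $r^*$, and then \emph{modifies the algorithm} to truncate each list $L_r$ at radius $4\gamma$ (see the paragraph immediately following the proof). After truncation, every examined $x$ has $\rho(x,r)\le 4\gamma$ with $\rho(q,r)\le 3\gamma$, whence $\rho(q,x)\le 7\gamma$. The constant $7$ thus arises from $3\gamma+4\gamma$ applied to the truncated lists, not from any control of the original radii $\psi_r$.
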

\begin{proof}
Clearly, each representative $r$ violating (\ref{ineq:pruneTheory2})
lies inside of $B(q,3\gamma)$, and the NN of $q$ will appear on one of
the lists $L_r$.  If the algorithm examined every $x\in L_r$, it would
be guaranteed to find the nearest neighbor, but we cannot say how many
total points it will examine.  However, the algorithm does not need to
examine the entire list, as we now show.

Suppose that $x$ is $q$'s NN; what is
the maximum distance it can be from its representative $r$?
From the triangle inequality, $\rho(x,r) \leq \rho(x,q) + \rho(q,r)$.
But since $x$ is $q$'s NN, and since $r\in X$, $\rho(x,q) \leq
\gamma$.  The other term is bounded by $3\gamma$ on account of
(\ref{ineq:pruneTheory2}).  Hence 
the nearest neighbor $x$ must lie within $4\gamma$ of its
representative. 

Since any considered representative $r$ satisfies $\rho(q,r)\leq
3\gamma$ (by (\ref{ineq:pruneTheory2})) and the NN $x$ of $q$ is
within $4 \gamma$ of $r$, the triangle inequality implies that
$\rho(q,x) \leq 7\gamma$.  
\end{proof}

We have shown that the search algorithm only needs to compute
distances from $q$ to points $x$ which are within distance $4\gamma$
of their representative.  Hence, if the lists $L_r$ are stored in
sorted order according to the distance to $r$, the search algorithm
can simply ignore all points $x$ more than distance $4\gamma$ from
their representative.\footnote{For the purpose of scheduling on some
  systems, it may be advantageous to compute how much of each list
  $L_r$ must be explored before the second brute force operation
  begins.  This can be done in time $O(\log{n})$ per list.}

Finally, we bound the expected number of examined points.  From Claim
\ref{cl:ballBnd}, all examined points lie in $B(q,7\gamma)$.
Applying the expansion rate condition, we have that
\begin{align}
|B(q,7\gamma)| \leq |B(q,8\gamma)| \leq c^3|B(q,\gamma)|. \label{ineq:cardBnd}
\end{align}
From Claim \ref{cl:numPts}, $\E|B(q,\gamma)| = n_r/n$, which we can
plug into (\ref{ineq:cardBnd}).  
As each $x$ only appears on one list $L_r$, each $x$ is only compared
to $q$ once, implying that (\ref{ineq:cardBnd}) bounds not only the
\emph{cardinality} of the examined set points, but also the number of
computations (in the second step).

Putting everything together, we have the following theorem.
\begin{theorem}
The expected number of points examined in the second stage of the
exact search algorithm is $c^3 n/n_r$, which is
$O(c^{3/2}\sqrt{n})$ for the standard parameter setting. \label{thm:exact}
\end{theorem}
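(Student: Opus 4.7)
The plan is to combine the two preceding claims with the expansion rate condition, essentially assembling pieces that are already in hand.

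First, I would invoke Claim \ref{cl:ballBnd} to conclude that every database point $x$ whose distance to $q$ gets computed in the second stage lies in $B(q, 7\gamma)$. This is the key structural fact: pruning via inequality (\ref{ineq:pruneTheory2}) forces the surviving representatives into $B(q, 3\gamma)$, and the sorted-list trick lets us ignore any point more than $4\gamma$ from its representative, so the triangle inequality confines examined points to $B(q, 7\gamma)$. Crucially, since each $x \in X$ is assigned to a unique nearest representative, each $x$ appears on exactly one list $L_r$, so the cardinality bound on the examined set equals the bound on the number of distance computations.

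Next, I would apply the expansion rate condition three times in succession to pass from $|B(q, \gamma)|$ up to $|B(q, 8\gamma)|$, giving $|B(q, 7\gamma)| \leq |B(q, 8\gamma)| \leq c^3 |B(q, \gamma)|$, which is exactly (\ref{ineq:cardBnd}). Taking expectations over the random choice of $R$ and plugging in Claim \ref{cl:numPts}, which says $\E|B(q, \gamma)| = n / n_r$, yields the bound
\begin{align*}
\E[\text{points examined in stage 2}] \;\leq\; c^3 \cdot \frac{n}{n_r}.
\end{align*}
This establishes the first half of the theorem.

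Finally, substituting the standard parameter setting $n_r = \Theta(c^{3/2}\sqrt{n})$ gives $c^3 n / n_r = O(c^{3/2}\sqrt{n})$, completing the proof. Honestly, there is no real obstacle here: all the nontrivial work was absorbed into Claims \ref{cl:numPts} and \ref{cl:ballBnd}, and the expansion rate is invoked in the obvious way. The only subtlety worth flagging in writing is the justification that each examined point is counted once (so that a cardinality bound on $B(q, 7\gamma) \cap \text{(examined)}$ doubles as a bound on distance evaluations), and that the expectation on the right-hand side of (\ref{ineq:cardBnd}) interacts correctly with the deterministic expansion inequality — both are immediate once stated.
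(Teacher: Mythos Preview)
Your proposal is correct and follows essentially the same route as the paper: invoke Claim~\ref{cl:ballBnd} to confine examined points to $B(q,7\gamma)$, apply the expansion condition three times to get $|B(q,7\gamma)| \leq c^3|B(q,\gamma)|$, plug in $\E|B(q,\gamma)| = n/n_r$ from Claim~\ref{cl:numPts}, and note that unique assignment makes the cardinality bound a computation bound. The only differences are cosmetic---you spell out the sorted-list and uniqueness points a bit more explicitly than the paper does.
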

Since the time for the first brute force step was also $O(c^{3/2}\sqrt{n})$,
we have shown that the expected runtime of the exact search 
algorithm is $O(c^{3/2}\sqrt{n})$.   

\subsection{One-Shot Search}
The one-shot search algorithm is considerably simpler than the exact
search algorithm, and also uses a slightly different data structure
configuration.  In particular, the algorithm searches only a single
representative list per query, and the ownership lists of the RBC will
usually overlap.  Unlike the exact search algorithm, the one-shot
algorithm only returns the NN with high probability, similar to
locality sensitive hashing \cite{indyk98}.  

With these differences in mind, the resulting time complexity bound is
actually quite similar to the bound in Theorem \ref{thm:exact}.
Recall that there are two parameters governing its run time: $n_r$,
the number of representatives; and $s$, the number of points assigned
to each representative.  Hence the time complexity of the one-shot
search algorithm is $O(n_r + s)$.  The following theorem describes the
setting of these parameters to guarantee a high probability of
success.  
\begin{theorem}
Set the parameters
\begin{align*}
n_r\;=\;s\;=\;c\sqrt{n}\cdot\sqrt{\ln\frac{1}{\delta}}.
\end{align*}
Then the one-shot algorithm returns the correct NN with probability at
least $1-\delta$. \label{thm:one-shot}
\end{theorem}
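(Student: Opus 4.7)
The plan is to reduce the correctness of the one-shot algorithm to a single event: that $q$'s true nearest neighbor $x^{*} \in X$ lies in the ownership list $L_{r_q}$, where $r_q$ is the closest representative to $q$ and $\gamma = \rho(q, r_q)$. Since $L_{r_q}$ is the $s$ closest database points to $r_q$, it suffices to show that with probability $\geq 1-\delta$ there are at most $s$ database points within distance $\rho(r_q, x^{*})$ of $r_q$. I would begin with a triangle-inequality observation: because $x^{*}$ is the NN of $q$ and $r_q \in X$, we have $\rho(q, x^{*}) \leq \gamma$, and hence $\rho(r_q, x^{*}) \leq \rho(r_q, q) + \rho(q, x^{*}) \leq 2\gamma$. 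So it is enough to guarantee $|B(r_q, 2\gamma)| \leq s$.

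Next, I would bound $|B(r_q, 2\gamma)|$ in terms of a ball around $q$, since the algorithm's randomness is more naturally expressed there. Any $y \in B(r_q, 2\gamma)$ satisfies $\rho(q, y) \leq \gamma + 2\gamma = 3\gamma$, so $B(r_q, 2\gamma) \subseteq B(q, 3\gamma) \subseteq B(q, 4\gamma)$. Applying the expansion-rate assumption twice to the center $q \in X \cup Q$ yields $|B(q, 4\gamma)| \leq c \cdot |B(q, 2\gamma)| \leq c^{2} \cdot |B(q, \gamma)|$. Putting this together, failure of the algorithm implies $|B(q, \gamma)| > s/c^{2}$.

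The final ingredient is a high-probability bound on $|B(q,\gamma)|$. As in the proof of Claim \ref{cl:numPts}, order the database by distance to $q$; the event $|B(q,\gamma)| > k$ is precisely the event that none of the first $k$ ordered points is chosen as a representative, which has probability at most $(1 - n_r/n)^{k} \leq e^{-k n_r/n}$. Setting $k = (n/n_r)\ln(1/\delta)$ gives failure probability at most $\delta$. It remains to verify that the stated parameter choice makes the two inequalities line up: with $n_r = s = c\sqrt{n}\sqrt{\ln(1/\delta)}$, a direct substitution shows $c^{2}(n/n_r)\ln(1/\delta) = c\sqrt{n}\sqrt{\ln(1/\delta)} = s$, so the bad event $|B(q,\gamma)| > s/c^{2}$ coincides with the geometric tail event we just controlled.

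The main obstacle is conceptual rather than computational: recognizing that the correct geometric object to reason about is a ball around the representative $r_q$ (whose radius is forced by the triangle inequality to be $2\gamma$), and then transporting this ball back to a ball around $q$ so that both the expansion-rate hypothesis and the geometric-trial argument from Claim \ref{cl:numPts} can be reused without modification. Once that reduction is in place, the parameter calibration is immediate.
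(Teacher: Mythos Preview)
Your proposal is correct and is essentially the paper's own argument: the paper phrases the sufficient condition for success as $\rho(q,r_q)\le \psi_{r_q}/2$ and then, on failure, obtains $|B(r_q,\psi_{r_q})|\le |B(r_q,2\gamma)|\le |B(q,4\gamma)|\le c^2|B(q,\gamma)|$, concluding $|B(q,\gamma)|\ge s/c^2$ and bounding the probability that no such point was selected by $(1-n_r/n)^{s/c^2}\le e^{-n_r s/(c^2 n)}=\delta$. Your route---bounding $\rho(r_q,x^*)\le 2\gamma$ directly and then $|B(r_q,2\gamma)|\le c^2|B(q,\gamma)|$---is the same chain with $\psi_{r_q}$ bypassed, and the tail bound and parameter check are identical.
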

The proof is in Appendix \ref{ap:one-shot}.

Hence the time complexity of the one-shot algorithm is $O(c\sqrt{n})$
times a factor dependent on the desired success rate.  Notice that the
dependence on $c$ is lower for the one-shot algorithm than the exact
algorithm; this reduced complexity seems to be reflected in the
experiments.  


\section{Experiments}
We perform several sets of experiments to demonstrate the
effectiveness of our methods.  The first, and probably most
important, set of experiments demonstrate the performance benefit of
the RBC on a 48-core machine, as compared to a brute force
implementation (\S\ref{exp:main}).  These experiments show that the RBC
significantly reduces the work required for NN search and that it
parallelizes effectively.  

The second set of experiments demonstrates that the RBC is effective
on graphics hardware (\S\ref{exp:gpu}).  It is challenging to deploy
data structures on such hardware, but very important because of the
ubiquity of GPUs in scientific and database systems.  

The final set of experiments compares the performance of the RBC to
the Cover Tree on a desktop machine (\S\ref{exp:covtree}).  These
experiments demonstrate that the exact search algorithm is competitive
with the state-of-the-art even on a machine with a low degree of
parallelism.

\subsection{Experimental setup}
Our CPU implementation of the RBC was written in C and parallelized
with OpenMP.  Our GPU implementation was written in C and CUDA.  
Both are available for download from the author's website.

In very low-dimensional spaces, basic data structures like kd-trees
are extremely effective, hence the challenging cases are data that is
somewhat higher dimensional.  We experiment on several different data
sets over a range of dimensionalities.  Table \ref{tbl:data} provides
a quick overview; we describe a few details next.
\begin{table}
\centering
\begin{tabular}{lll}
Name & Num pts & Dim \\\hline
Bio & 200k & 74\\
Covertype & 500k & 54\\
Physics & 100k & 78 \\
Robot & 2M & 21 \\
TinyIm & 10M & 4-32 
\end{tabular} \caption{Overview of data sets.} \label{tbl:data}
\end{table}

The Bio, Covertype, and Physics data sets are standard benchmark data
sets used in machine learning and are available from the UCI
repository \cite{fa.10}.  They have been used to benchmark 
NN search previously \cite{BKL06,cd.07,rlog.09}.  The Robot data was generated
from a Barret WAM robotic arm; see \cite{np.10}.
The TinyIm data set is taken from the Tiny Images database, which
is used for computer vision research \cite{trf.2008}.  We took the
image descriptors and reduced the dimensionality using the method of
random projections.\footnote{This dimensionality reduction technique
  approximately preserves the lengths of vectors, and hence is a
  useful preprocessor for NN search; see e.g. \cite{LMGY04}.
  The technique is formally justified by the
  Johnson-Lindenstrauss Lemma \cite{jl.84}.}  We 
experimented with dimensionalities of 4, 8, 16, 32.   For all
experiments, we measured distance with the $\ell_2$-norm
(i.e. standard Euclidean distance), which is appropriate for this
data.  

We perform the first set of CPU experiments on a 48-core/4-chip AMD
server machine.  Each chip is a 12-core AMD 6176SE processor, and is
divided into two 6-core segments.  This machine has a high core count,
so it is a good system to test the scalability of our algorithms.  

We compare to the Cover Tree on a quad-core Intel Core i5 machine,
which is a reasonable representative for a mid-range desktop.

Our GPU experiments are run on a NVIDIA Tesla c2050 graphics card.
This card is well-suited to general and scientific computation as it
has somewhat more memory than standard graphics cards (3GB), and
provides general-purpose architectural features such as error-correcting memory and
caching.\footnote{On the other hand, in limited experiments with
  (much cheaper) consumer-grade GPUs, we noticed that the
  runtimes were not dramatically different.}  We previously described 
the details of our GPU implementation in a workshop paper \cite{c.10}.

\subsection{48-core experiments}\label{exp:main}
\begin{figure*}
\centering
\begin{tabular}{ccc}
\includegraphics[width=0.3\linewidth,height=!]{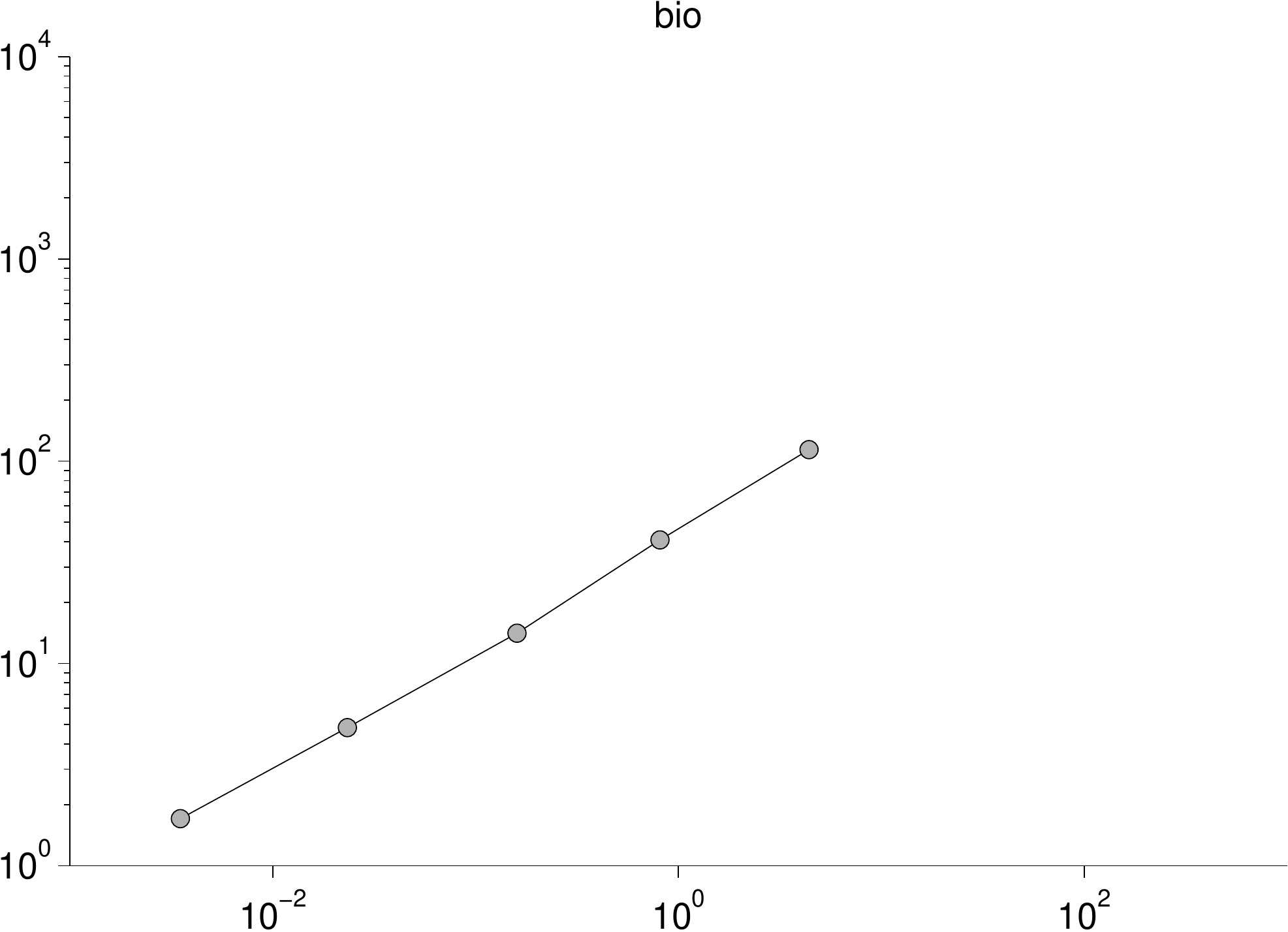}
& 
\includegraphics[width=0.3\linewidth,height=!]{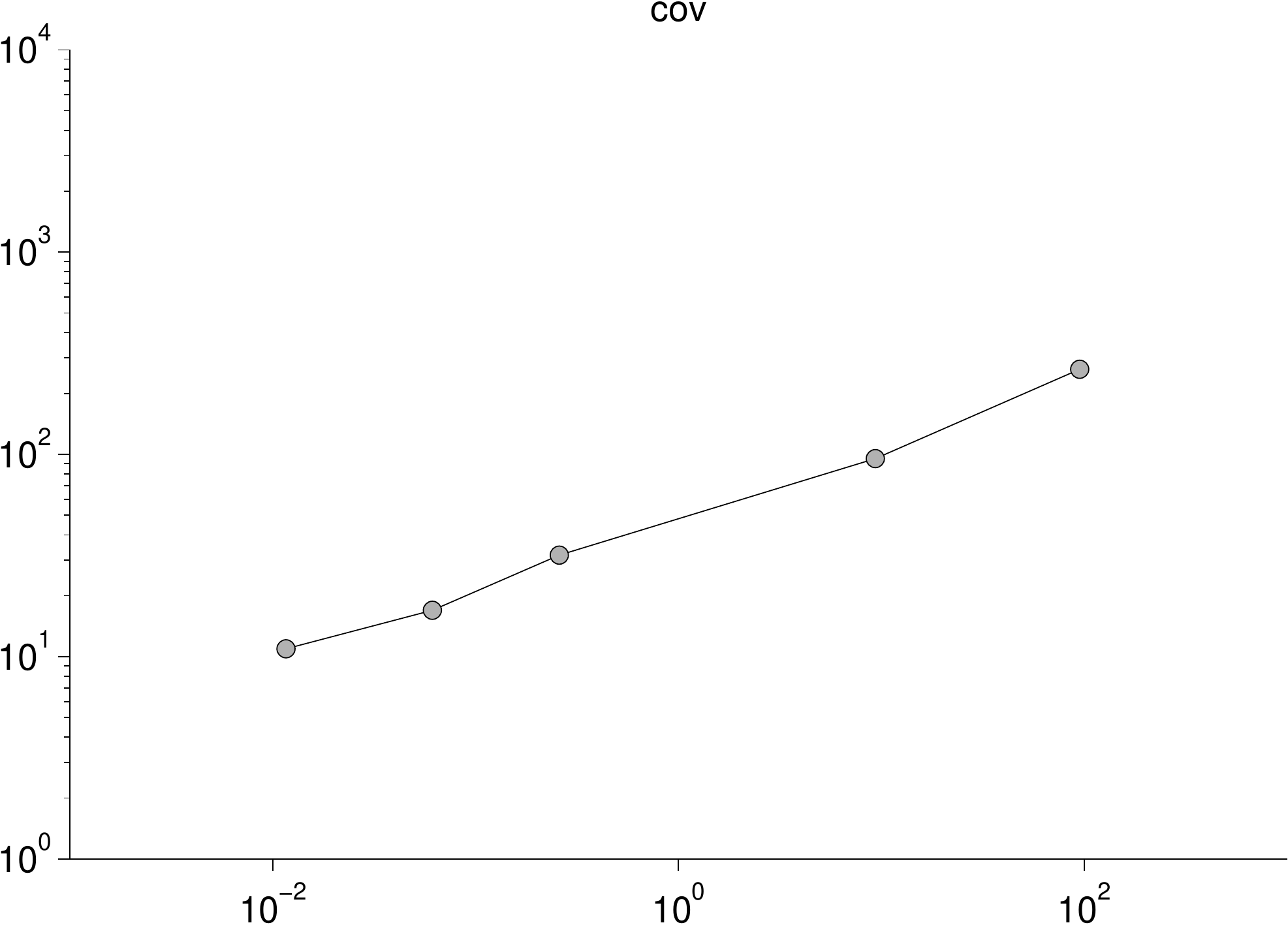}
& 
\includegraphics[width=0.3\linewidth,height=!]{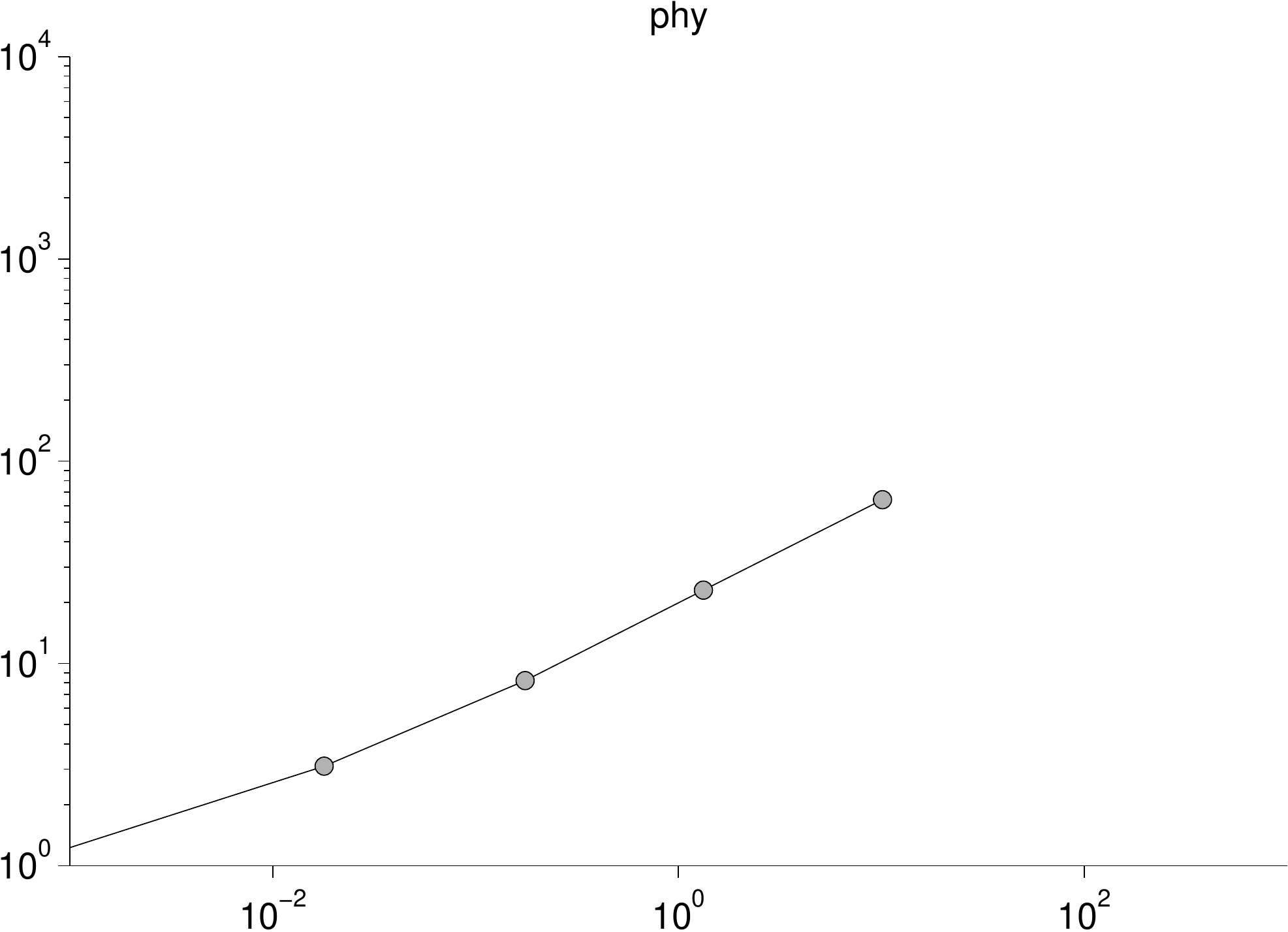}
\\
\includegraphics[width=0.3\linewidth,height=!]{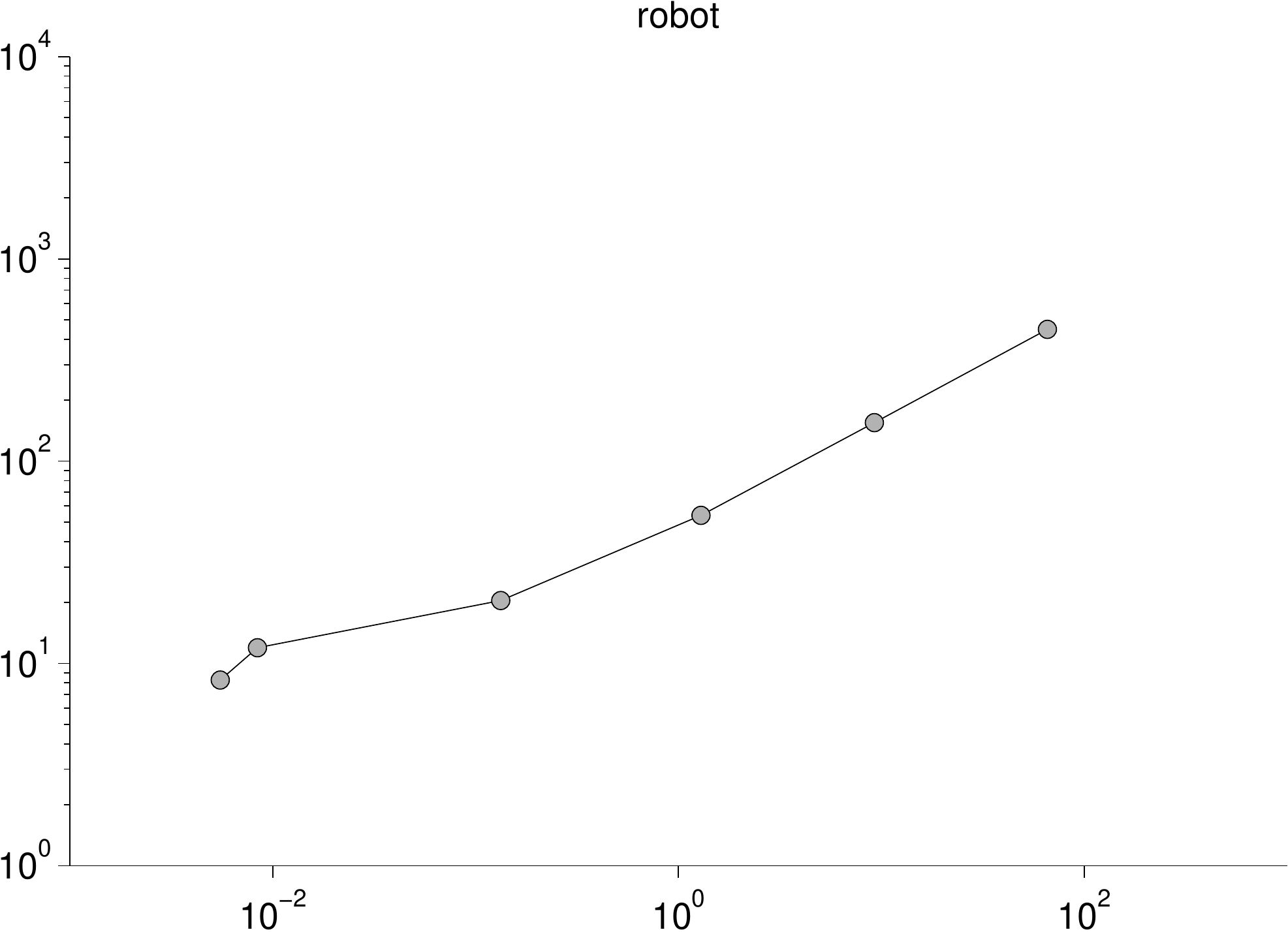}
& 
\includegraphics[width=0.3\linewidth,height=!]{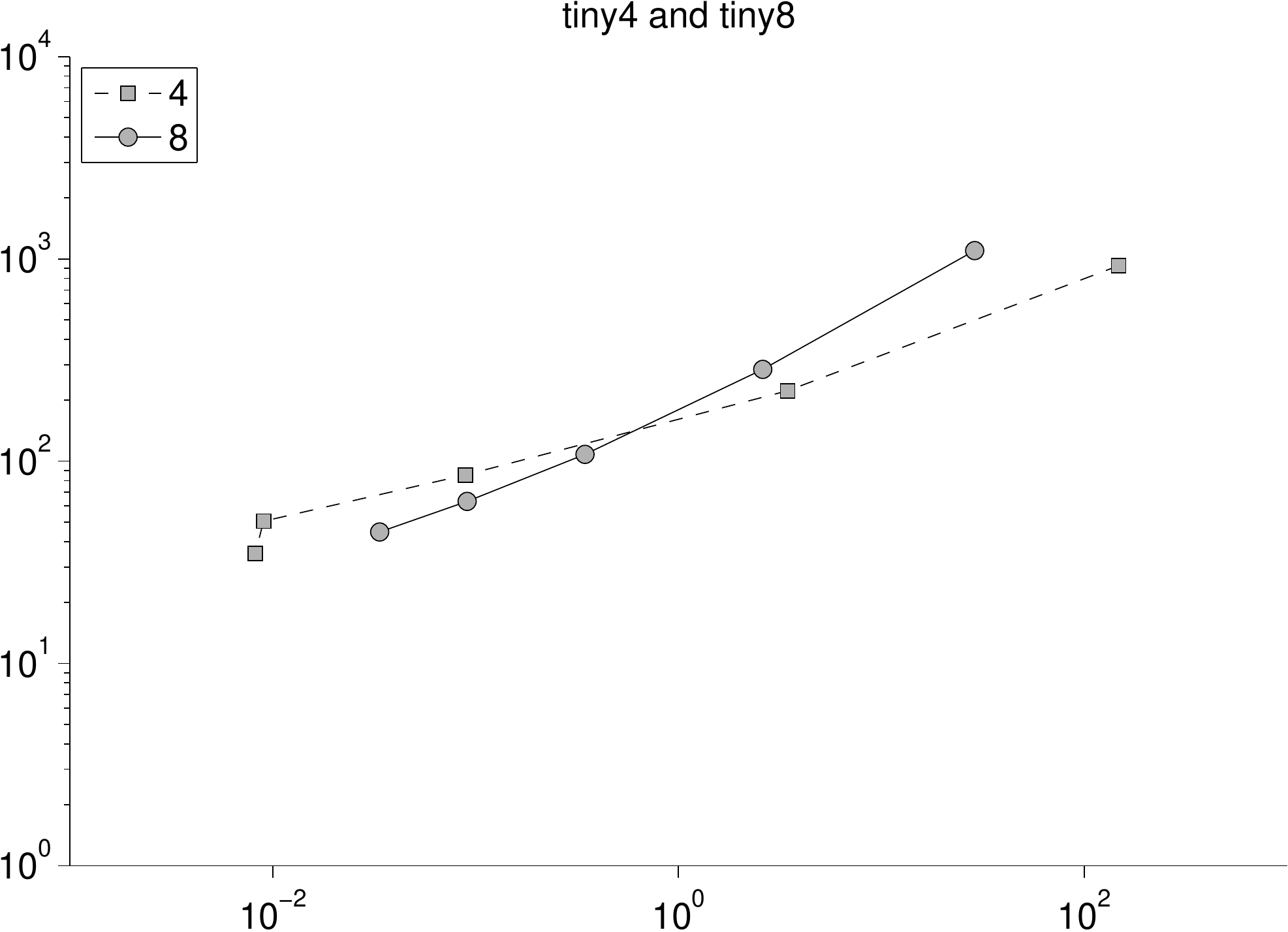}
& 
\includegraphics[width=0.3\linewidth,height=!]{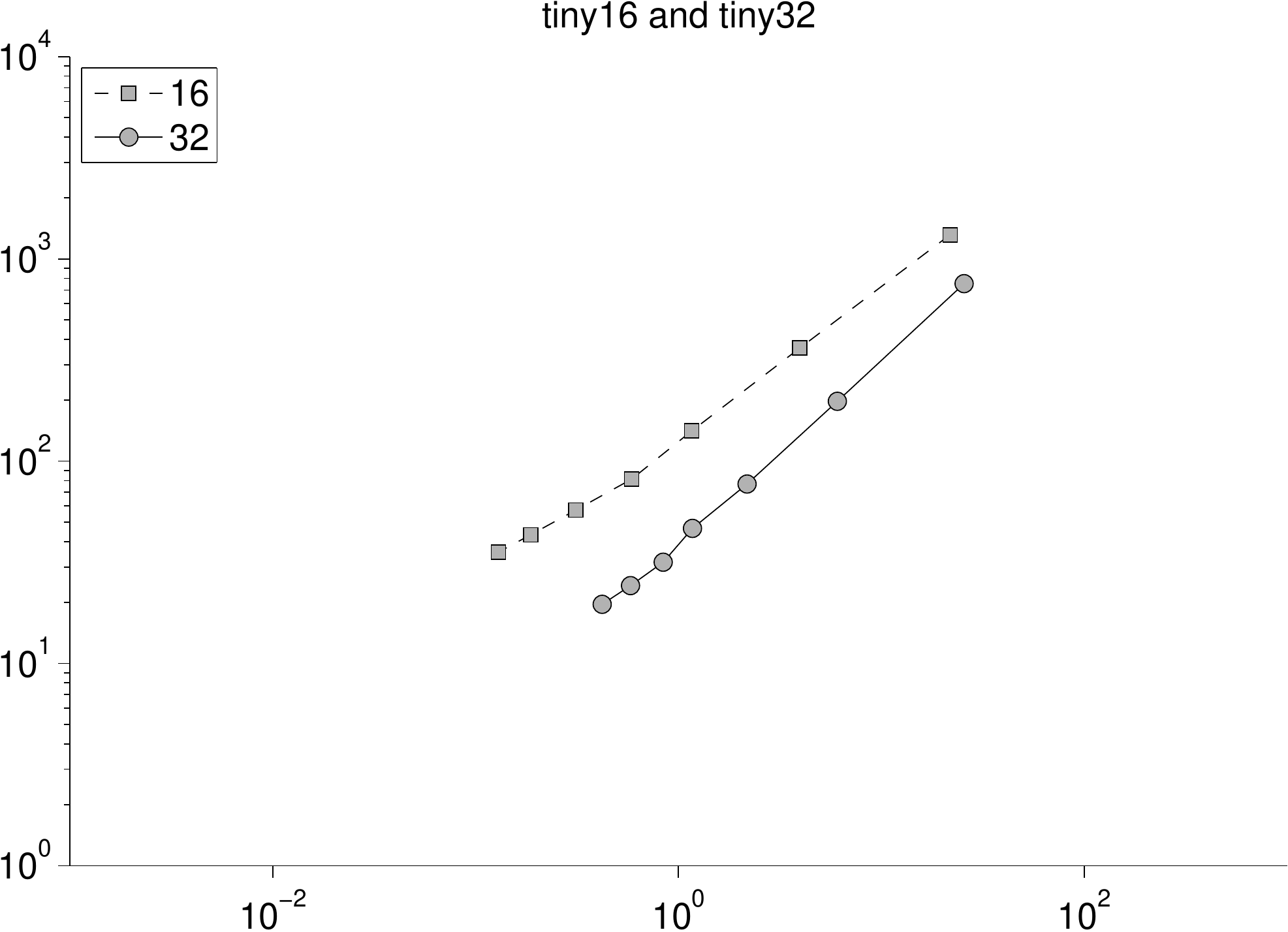}
\end{tabular} \caption{Results of the one-shot algorithm.  This is a
  log-log plot of the speedup as a function of the error rate.   The
  $x$-axis is logarithmic and runs from $10^{-3}$ to $10^3$ and
  signifies the average (over queries) rank of the returned result.
  For example, a rank of $10^0$ indicates that on average the
  algorithm returns the 2nd NN.  
  The $y$-axis is also logarithmic and runs from $10^0$ (no speedup)
  to $10^4$ (10000x speedup).}\label{fig:defeat}
\end{figure*}

We compare the performance of our methods to brute force search on the
48-core machine.  As far as we know, there is no readily available
accelerated NN method for such a machine.  Furthermore, brute force is
already quite fast because of the raw computational power.  

First, we look at the performance of the exact search algorithm,
which is guaranteed to return the exact NN.  Figure \ref{fig:exactBar}
shows the results.  We are getting a strong 
speedup of up to two orders of magnitude, despite the challenging
hardware setting and the reasonably high dimensionality of the
data.\footnote{There is one parameter in this algorithm; namely, the
  number of representatives chosen.  The retrieval times were not
  particularly sensitive to this choice; see Appendix \ref{ap:exact} for a
  detailed examination of the parameter setting.}

\begin{figure}
\centering
\includegraphics[width=.8\linewidth, height=!]{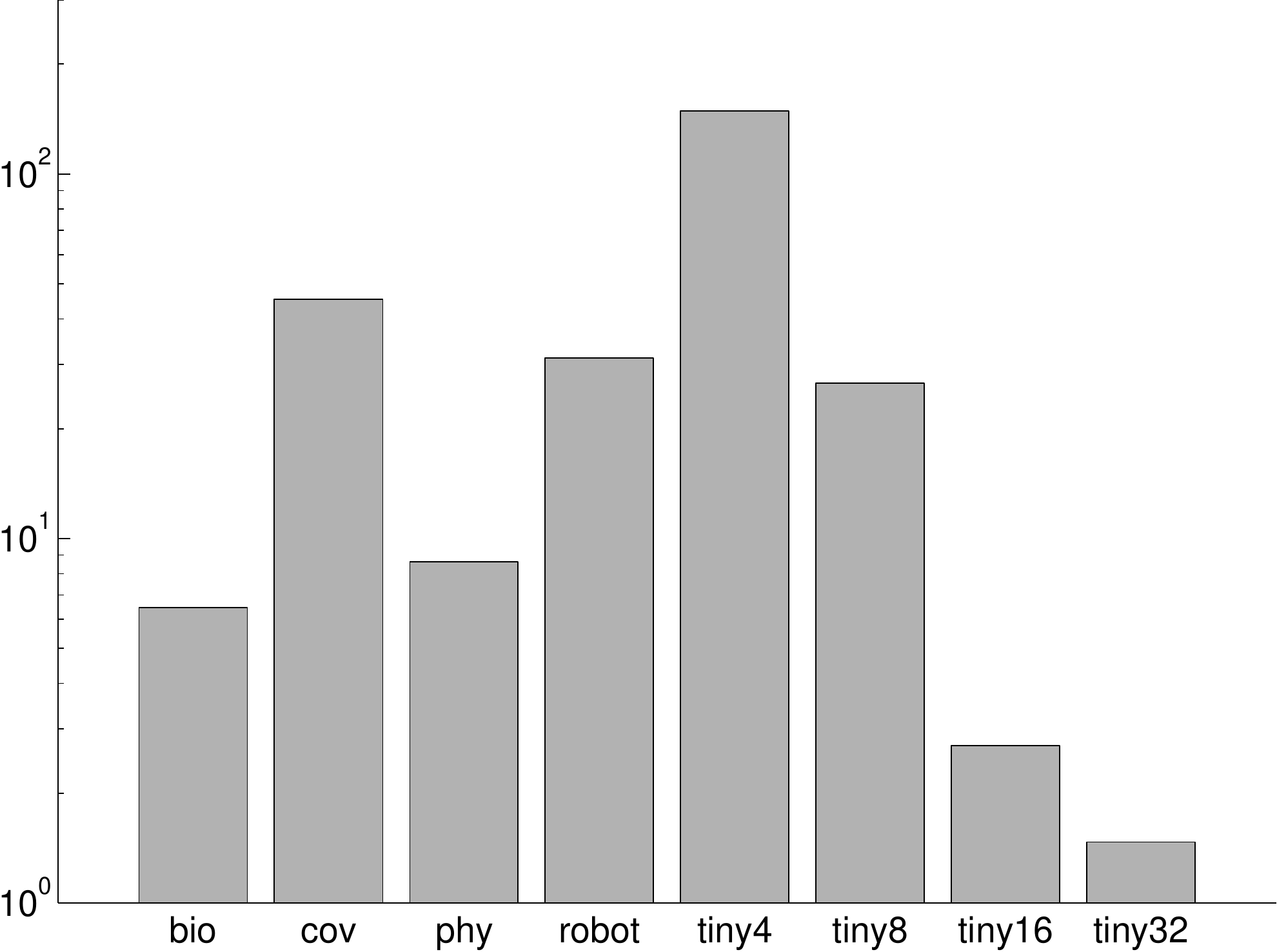} 
\caption{Speedup of exact search over brute force.} \label{fig:exactBar}
\end{figure}

Next we consider the one-shot search algorithm.  As developed in the
theory section, we set $n_r$ (the number of representatives) and $s$
(the number of points owned per representative) equal to one another.
The parameter allows one to trade-off between the quality of the
solution and time required; we scan over this parameter to show the
trade-off.  This algorithm is not guaranteed to return a nearest neighbor, so we
must evaluate the quality of the solution.  A standard error measure is
the \emph{rank} of the returned point: i.e., the number of database
points closer to the query than the returned point \cite{rlog.09}.  A
rank of 0 denotes the exact NN, and rank of 1 denotes the second NN,
and so on. 

Figure \ref{fig:defeat} shows the results.  The speedup achieved in
these experiments is quite significant;  even with a rank 
around $10^{-1}$ (very close to exact search), the \emph{worst} speedup is an
order of magnitude.  In applications  where a small amount of error is
tolerable, the one-shot search algorithm can provide a massive
speedup, even better than our exact search algorithm in some cases.
In many applications, e.g. in data mining, there is considerable
uncertainty associated with the data, so a small amount of error in
the NN retrieval is not important.    
\subsection{GPU experiments} \label{exp:gpu}
GPUs have impressive brute force search performance \cite{gdb.08}. 
However, the GPU architecture makes efficient data structure design
quite difficult.  In particular, GPUs are vector-style processors with
limited branching ability; hence conditional computation typically
seriously under-utilizes these devices.  

We show that our RBC one-shot algorithm provides a substantial speedup
over the already-fast brute force search on a GPU.  Table
\ref{fig:gpuResults} shows the results.  We show only the speedups, as
the error rate is the same as that of the CPU experiments.  The
parameter was set to achieve an error rate of roughly $10^{-1}$ (refer
back to Figure \ref{fig:defeat}).  Our method is clearly very
effective in this setting; despite the challenging hardware design, it
provides a one-to-two order of magnitude speedup on all datasets.


\begin{table}
\centering
\begin{tabular}{ll}
Data & Speedup \\ \hline
Bio & 38.1  \\
Covertype & 94.6  \\
Physics & 19.0 \\
Robot & 53.2 \\
TinyIm4 & 188.4 \\
\end{tabular} \caption{GPU results: speedup of the one-shot algorithm
  over brute force search (both on the GPU). }\label{fig:gpuResults}
\end{table}

\subsection{Cover Tree comparison} \label{exp:covtree}
Finally, we investigate the performance of the RBC on a quad-core
desktop machine.  We compare to the Cover Tree, which has
state-of-the-art empirical performance on a sequential machine, and
which was developed under the same notion of intrinsic dimensionality
as the RBC.   

This is a challenging comparison for the RBC, as the system does not
require the extreme constraints under which the RBC was designed.  
In particular, this system can branch effectively and has a low core
count.  Hence such a 
system can handle an algorithm with a much more complex computational
structure, like the Cover Tree search algorithm.  Suprisingly, 
the performance of the RBC is at or near state-of-the-art even in this
setting.   

The available implementation of the Cover Tree is single-core
\cite{BKL06}, and a naive parallelization would not significantly
benefit it.  In particular, one could 
split the database into $p$ chunks (one for each core), run $p$ Cover
Tree searches in parallel, and then perform a $p$-way reduce.
However, since the dependence of the Cover Tree on the number of data
points is only $O(\log{n})$, doing so would only decrease the runtime
from $O(c^6\log{n})$ to, at best, $O(c^6 \log{\frac{n}{p}})$, which is
a very minor improvement.  Hence we run the Cover Tree only on one
core, but allow the RBC to use the whole machine.  

Table \ref{tbl:cov} shows the results.\footnote{We were unable to get the 
Cover Tree software to run on the full TinyIm data sets, so we reduced
the database size to 1M.}  The RBC is competitive on all of the datasets, and
significantly outperforms the Cover Tree on the three largest datasets.
Again, these results are surprising, as our exact search algorithm
is much simpler than the Cover Tree search algorithm, and since our methods
have the additional (significant) constraint that they must work on
highly parallel systems.  

We note that the RBC has a significantly lower theoretical dependence
on the dimensionality than the Cover Tree ($O(c^{3/2})$ vs $O(c^6)$).
This is reflected in the experiments; the two datasets that the
Cover Tree significantly outperforms the RBC on are very
low-dimensional: the Tiny4 data set is four-dimensional, and the
Covertype dataset has low intrinsic dimensionality \cite{BKL06}.  This
reduced dependence on dimensionality appears to be another advantage
of the RBC and would be interesting to explore in future work.  

\begin{table}
\centering
\begin{tabular}{lll}
Data & Cover Tree& RBC  \\ \hline
Bio  & 18.9  & {6.4}\\
Covertype  & {0.4} & 1.1 \\
Physics  & 1.9 & {1.7} \\
Robot & {4.6}& 5.1  \\
Tiny4 & {0.5}& 1.2 \\
Tiny8 & 14.6 & {3.3}\\
Tiny16 & 178.9 & {25.1} \\
Tiny32 & 387.0 & {67.9}
\end{tabular} \caption{Comparison of the Cover Tree and the exact RBC
  algorithm on a quad-core desktop machine.  Times shown are the total
  query time in seconds for 10k queries.} \label{tbl:cov} 
\end{table}

\section{Conclusion}
In this paper, we introduced techniques for metric similarity search
on parallel systems.  In particular, we demonstrated that the RBC
search algorithms significantly reduce the work required for 
NN retrieval, while being structured in such a way that can be easily
implemented on parallel systems.  Our experiments show that these
techniques are practical on a range of modern hardware.  The theory
behind the RBC shows that the data structure is broadly effective.

Our code is available for download.  These implementations supply
additional low-level details on implementing the RBC.  Moreover, they
are practical tools for many NN search problems.  

An interesting direction for future work  is to explore the
performance of the RBC in a distributed or multi-GPU environment.  The
RBC data structure suggests a simple distribution of the database
according to the representatives that could be quite effective in such
environments.  There are many interesting details for study here, such
as I/O and communication costs, and the connection to distributed 
programming paradigms.  Furthermore, a distributed implementation
would be broadly useful in practice.  


\bibliographystyle{abbrv}
\bibliography{biblio}  

\begin{thebibliography}{10}

\bibitem{AMNSW98}
S.~Arya, D.~M. Mount, N.~S. Netanyahu, R.~Silverman, and A.~Wu.
\newblock An optimal algorithm for approximate nearest neighbor searching.
\newblock {\em Journal of the ACM}, 45(6):891--923, 1998.

\bibitem{BKL06}
A.~Beygelzimer, S.~Kakade, and J.~Langford.
\newblock Cover trees for nearest neighbor.
\newblock In {\em Proceedings of the International Conference on Machine
  Learning}, 2006.

\bibitem{b.95}
S.~Brin.
\newblock Near neighbor search in large metric spaces.
\newblock In {\em Proc. International Conference on Very Large Data Bases},
  1995.

\bibitem{b.2001}
J.~Buhler.
\newblock Efficient large-scale sequence comparison by locality-sensitive
  hashing.
\newblock {\em Bioinformatics}, 17(5):419--28, 2001.

\bibitem{bdhk.06}
B.~Bustos, O.~Deussen, S.~Hiller, and D.~Keim.
\newblock A graphics hardware accelerated algorithm for nearest neighbor
  search.
\newblock In {\em Computational Science – ICCS, volume 3994 of LNCS}, pages
  196--199. Springer, 2006.

\bibitem{c.10}
L.~Cayton.
\newblock A nearest neighbor data structure for graphics hardware.
\newblock In {\em First International Workshop on Accelerating Data Management
  Systems Using Modern Processor and Storage Architectures}, 2010.

\bibitem{cd.07}
L.~Cayton and S.~Dasgupta.
\newblock A learning framework for nearest neighbor search.
\newblock In {\em Advances in Neural Information Processing Systems 20}, 2007.

\bibitem{cnbm.01}
E.~Ch\'{a}vez, G.~Navarro, R.~Baeza-Yates, and J.~L. Marroqu\'{\i}n.
\newblock Searching in metric spaces.
\newblock {\em ACM Comput. Surv.}, 33(3):273--321, 2001.

\bibitem{c.99}
K.~L. Clarkson.
\newblock Nearest neighbor queries in metric spaces.
\newblock {\em Discrete and Computational Geometry}, 22:63--93, 1999.

\bibitem{C06}
K.~L. Clarkson.
\newblock Nearest-neighbor searching and metric space dimensions.
\newblock In {\em Nearest-Neighbor Methods for Learning and Vision: Theory and
  Practice}. 2006.

\bibitem{d.2008}
W.~Dong, Z.~Wang, W.~Josephson, M.~Charikar, and K.~Li.
\newblock Modeling {LSH} for performance tuning.
\newblock In {\em Proc. ACM conference on Information and Knowledge Management
  (CIKM)}, 2008.

\bibitem{d.00}
D.~L. Donoho.
\newblock High-dimensional data analysis: The curses and blessings of
  dimensionality.
\newblock Aide-memoire; {AMS} Conference on Math Challenges of the 21st
  century, 2000.

\bibitem{fa.10}
A.~Frank and A.~Asuncion.
\newblock {UCI} machine learning repository, 2010.

\bibitem{gdb.08}
V.~Garcia, E.~Debreuve, and M.~Barlaud.
\newblock Fast k nearest neighbor search using {GPU}.
\newblock In {\em CVPR Workshop on Computer Vision on GPU (CVGPU)}, 2008.

\bibitem{glwlm.04}
N.~K. Govindaraju, B.~Lloyd, W.~Wang, M.~Lin, and D.~Manocha.
\newblock Fast computation of database operations using graphics processors.
\newblock In {\em Proceedings of ACM SIGMOD}, pages 215--226. ACM Press, 2004.

\bibitem{indyk98}
P.~Indyk and R.~Motwani.
\newblock Approximate nearest neighbors: towards removing the curse of
  dimensionality.
\newblock In {\em Proc. Symposium on Theory of Computing (STOC)}, 1998.

\bibitem{jl.84}
W.~B. Johnson and J.~Lindenstrauss.
\newblock Extensions of lipschitz mappings into a hilbert space.
\newblock {\em Contemporary Mathematics}, 1984.

\bibitem{KR02}
D.~R. Karger and M.~Ruhl.
\newblock Finding nearest neighbors in growth-restricted metrics.
\newblock In {\em Proc. Symposium on Theory of Computing (STOC)}, 2002.

\bibitem{kcss.10}
C.~Kim, J.~Chhugani, N.~Satish, E.~Sedlar, A.~D. Nguyen, T.~Kaldewey, V.~W.
  Lee, S.~A. Brandt, and P.~Dubey.
\newblock {FAST}: fast architecture sensitive tree search on modern {CPU}s and
  {GPU}s.
\newblock In {\em Proceedings of ACM SIGMOD}, 2010.

\bibitem{KL04}
R.~Krauthgamer and J.~R. Lee.
\newblock Navigating nets: simple algorithms for proximity search.
\newblock In {\em Proc. Symposium on Discrete Algorithms (SODA)}, 2004.

\bibitem{LMGY04}
T.~Liu, A.~W. Moore, A.~Gray, and K.~Yang.
\newblock An investigation of practical approximate neighbor algorithms.
\newblock In {\em Advances in Neural Information Processing Systems}, 2004.

\bibitem{np.10}
D.~Nguyen-Tuong and J.~Peters.
\newblock Using model knowledge for learning inverse dynamics.
\newblock In {\em Proc. {IEEE} International Conference on Robotics and
  Automation}, 2010.

\bibitem{O89}
S.~Omohundro.
\newblock Five balltree construction algorithms.
\newblock Technical report, ICSI, 1989.

\bibitem{p.10}
D.~Patterson.
\newblock The trouble with multicore.
\newblock {\em {IEEE} Spectrum}, July 2010.

\bibitem{rlog.09}
P.~Ram, D.~Lee, H.~Ouyang, and A.~Gray.
\newblock Rank-approximate nearest neighbor search: Retaining meaning and speed
  in high dimensions.
\newblock In {\em Advances in Neural Information Processing Systems 22}, 2009.

\bibitem{rs.00}
S.~T. Roweis and L.~K. Saul.
\newblock Nonlinear dimensionality reduction by locally linear embedding.
\newblock {\em Science}, 2000.

\bibitem{tdsl.00}
J.~B. Tenenbaum, V.~de~Silva, and J.~C. Langford.
\newblock A global geometric framework for nonlinear dimensionality reduction.
\newblock {\em Science}, 2000.

\bibitem{trf.2008}
A.~Torralba, R.~Fergus, and W.~T. Freeman.
\newblock 80 million tiny images: a large dataset for non-parametric object and
  scene recognition.
\newblock {\em {IEEE} Transactions on Pattern Analysis and Machine
  Intelligence}, 2008.

\bibitem{V86}
E.~Vidal.
\newblock An algorithm for finding nearest neighbours in (approximately)
  constant average time.
\newblock {\em Pattern Recognition Letters}, 4:145--157, 1986.

\bibitem{wsb.98}
R.~Weber, H.-J. Schek, and S.~Blott.
\newblock A quantitative analysis and performance study for similarity-search
  methods in high-dimensional spaces.
\newblock In {\em Proc. International Conference on Very Large Data Bases},
  1998.

\bibitem{Y93}
P.~N. Yianilos.
\newblock Data structures and algorithms for nearest neighbor search in general
  metric spaces.
\newblock In {\em Symposium on Discrete Algorithms}, 1993.

\end{thebibliography}

\appendix

\begin{figure*}
\centering
\begin{tabular}{ccc}
\includegraphics[width=0.3\linewidth,height=!]{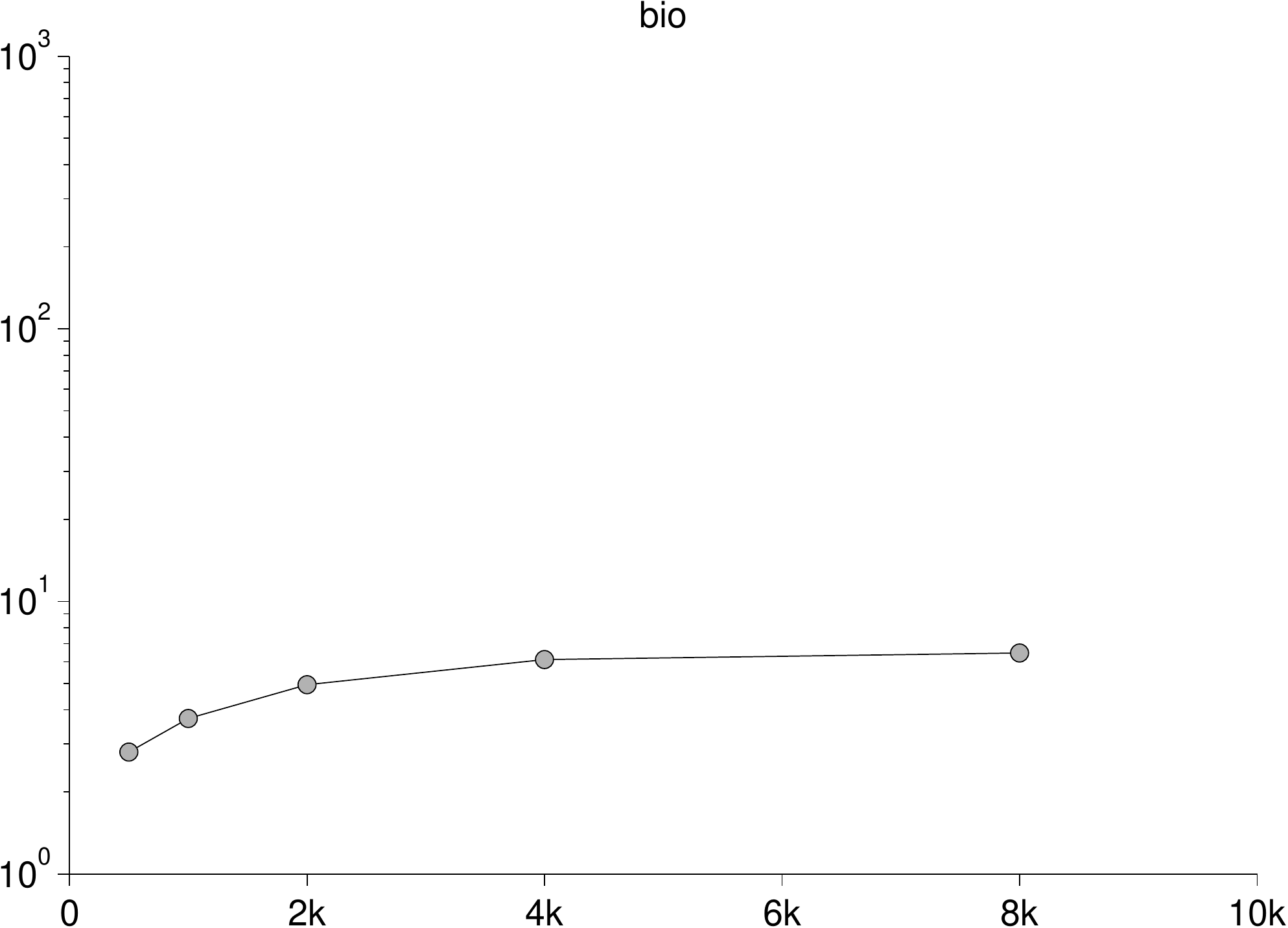}
& 
\includegraphics[width=0.3\linewidth,height=!]{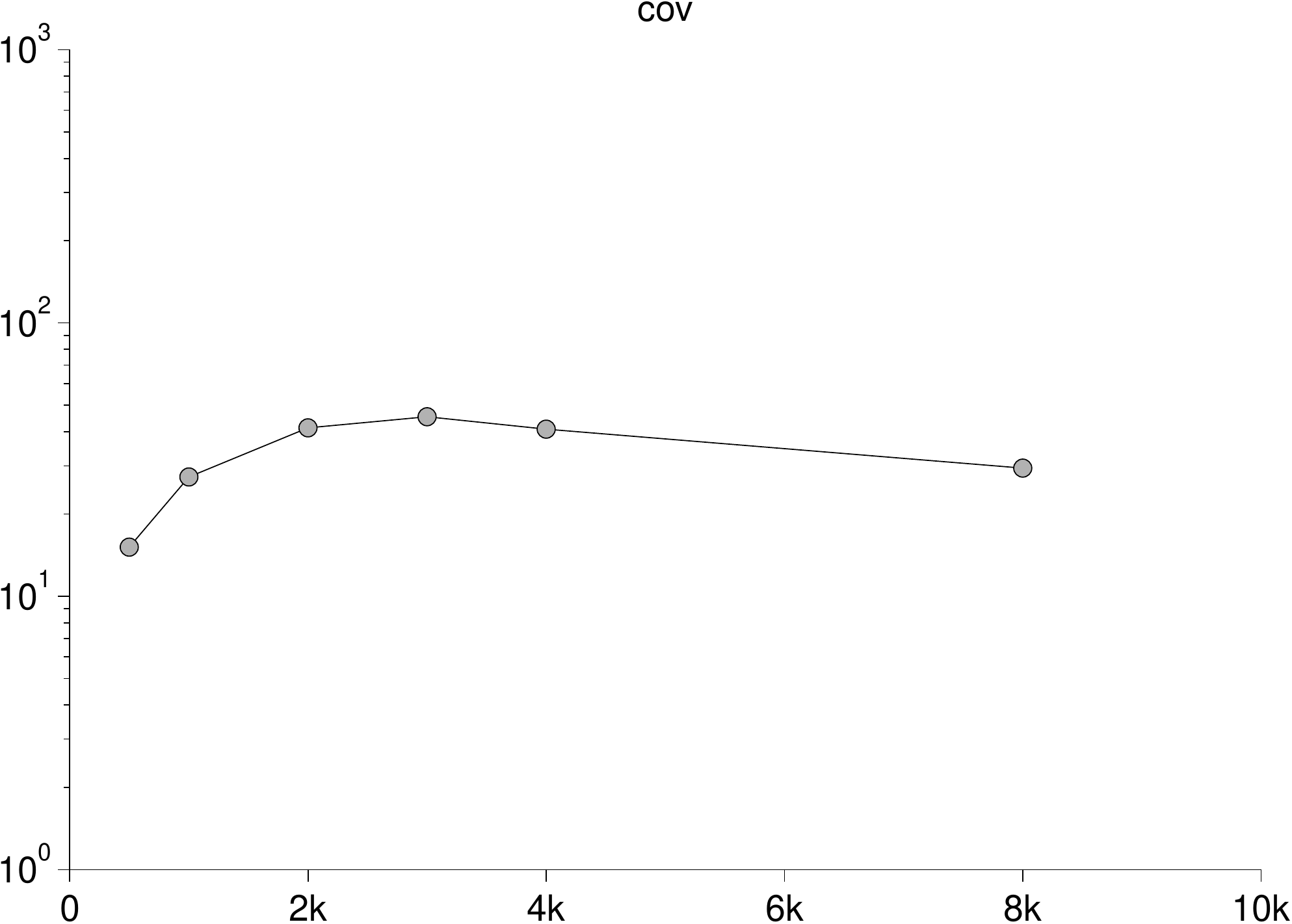}
& 
\includegraphics[width=0.3\linewidth,height=!]{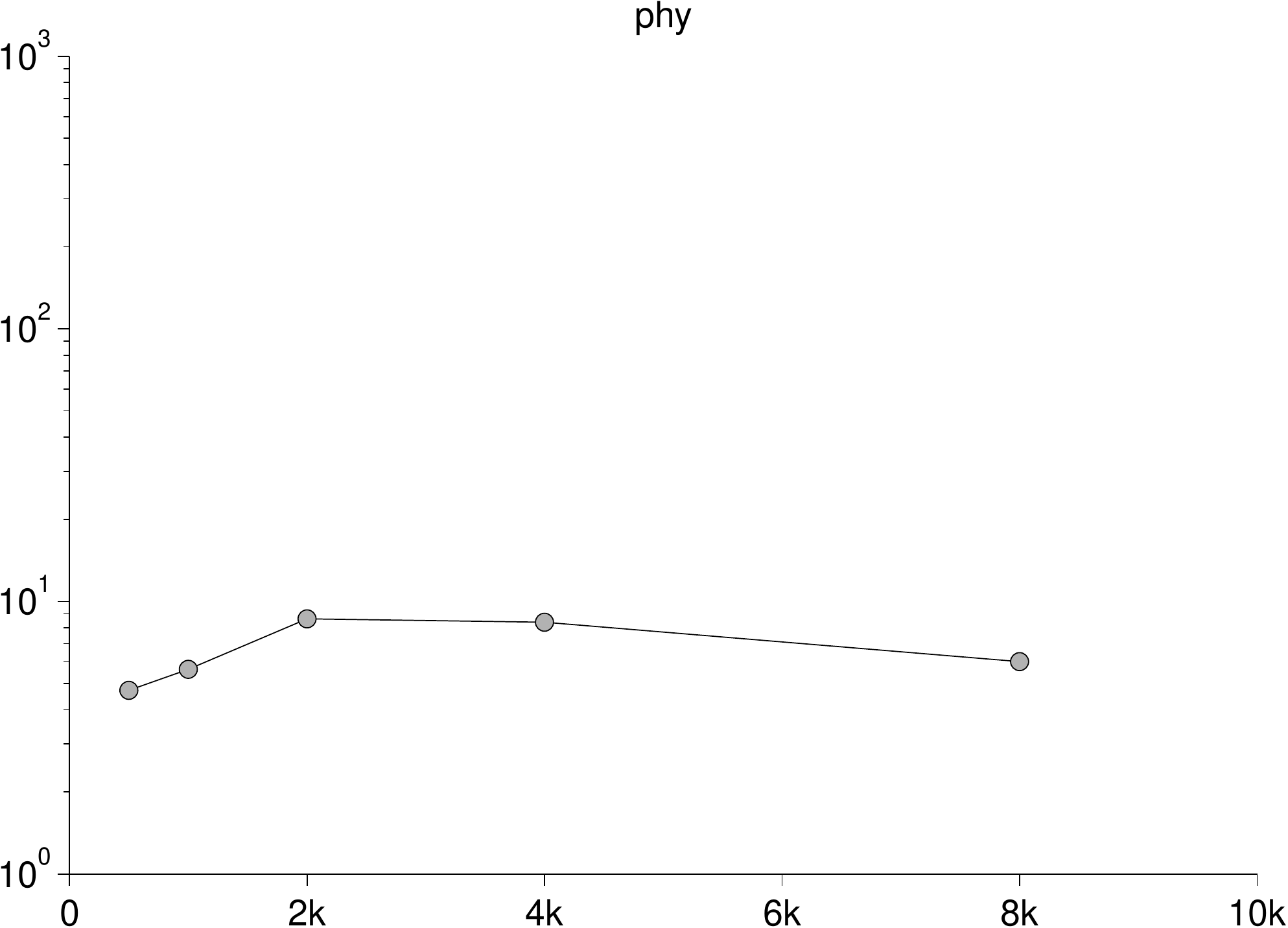}
\\
\includegraphics[width=0.3\linewidth,height=!]{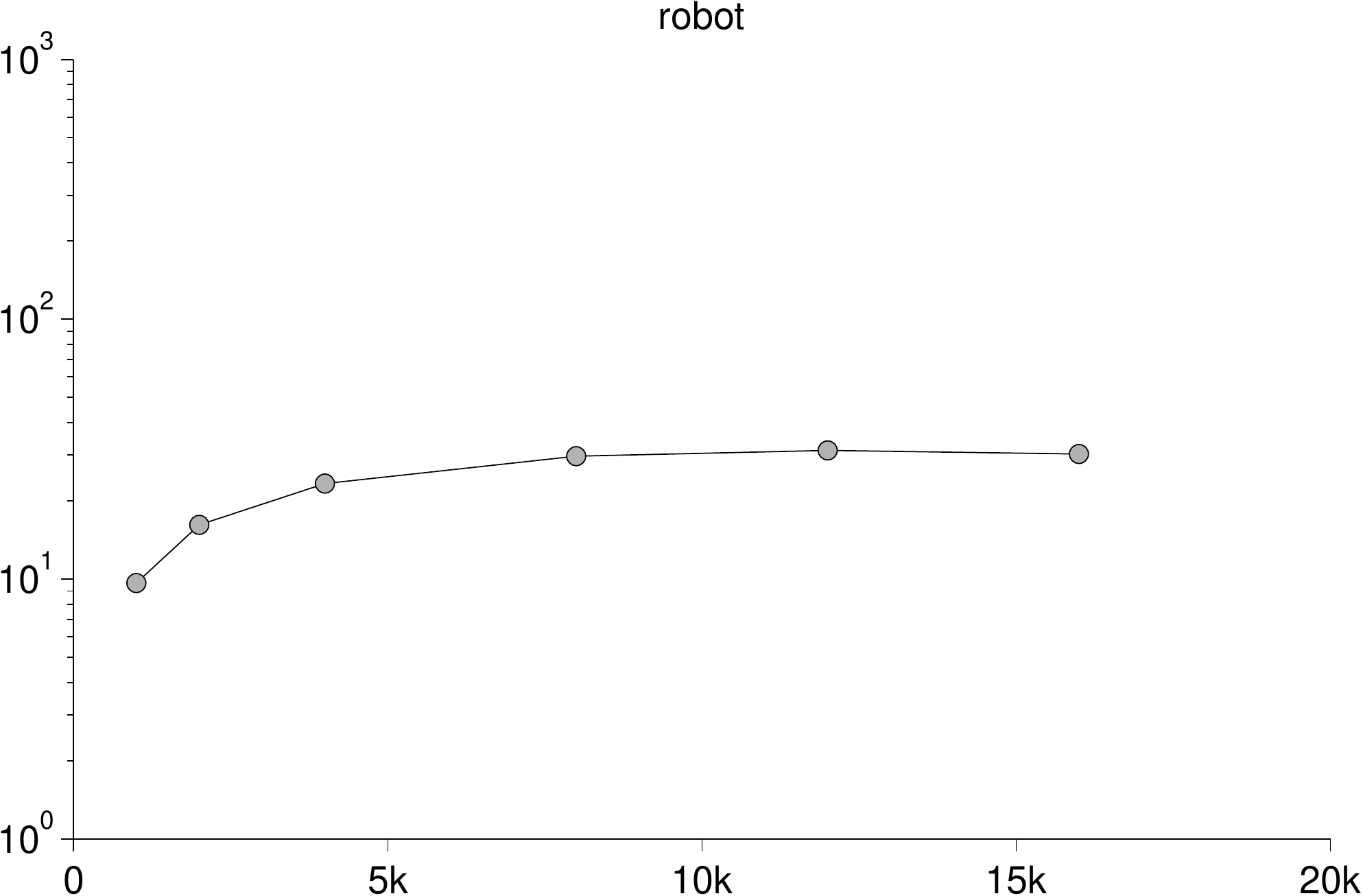}
& 
\includegraphics[width=0.3\linewidth,height=!]{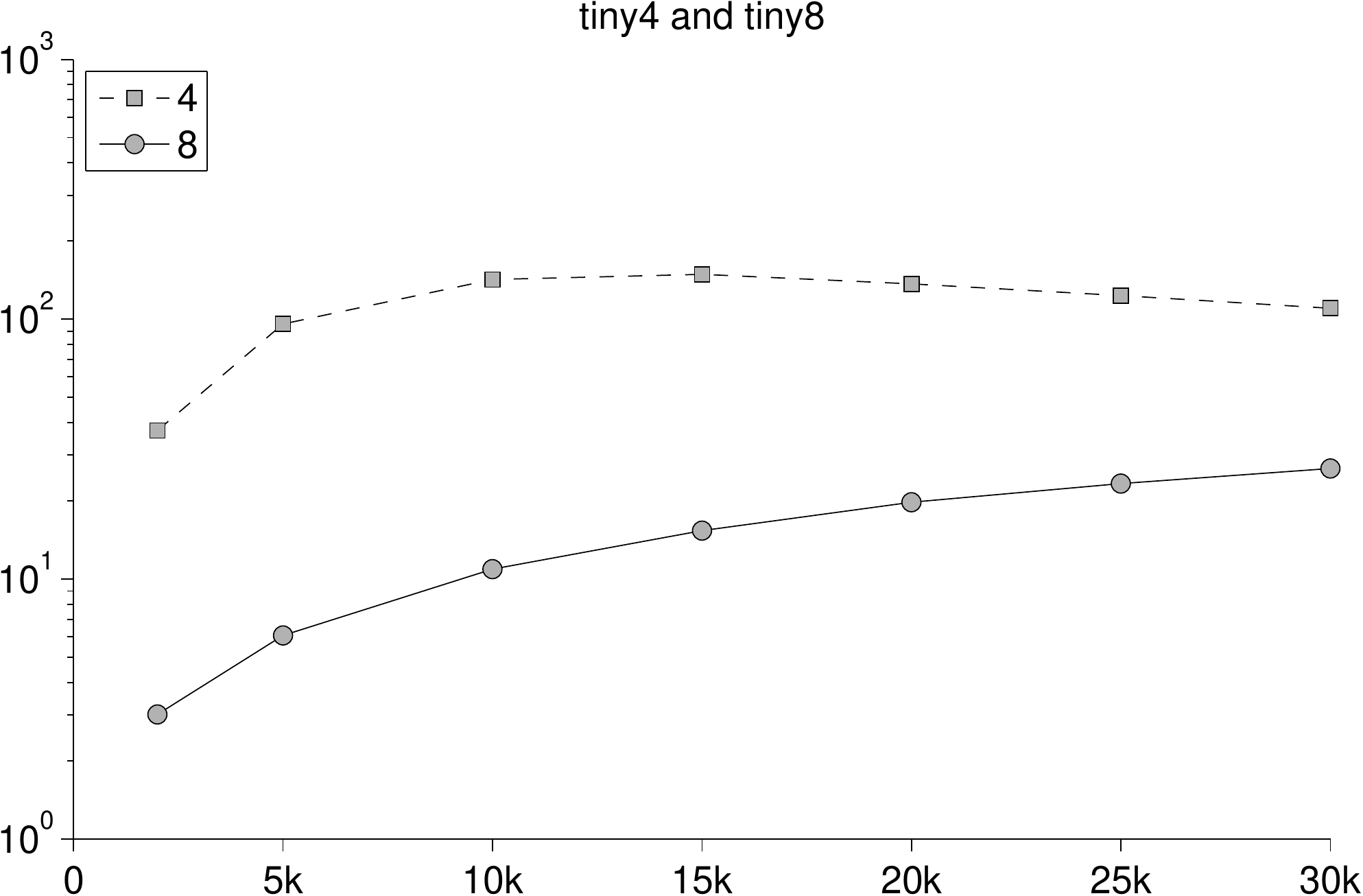}
& 
\includegraphics[width=0.3\linewidth,height=!]{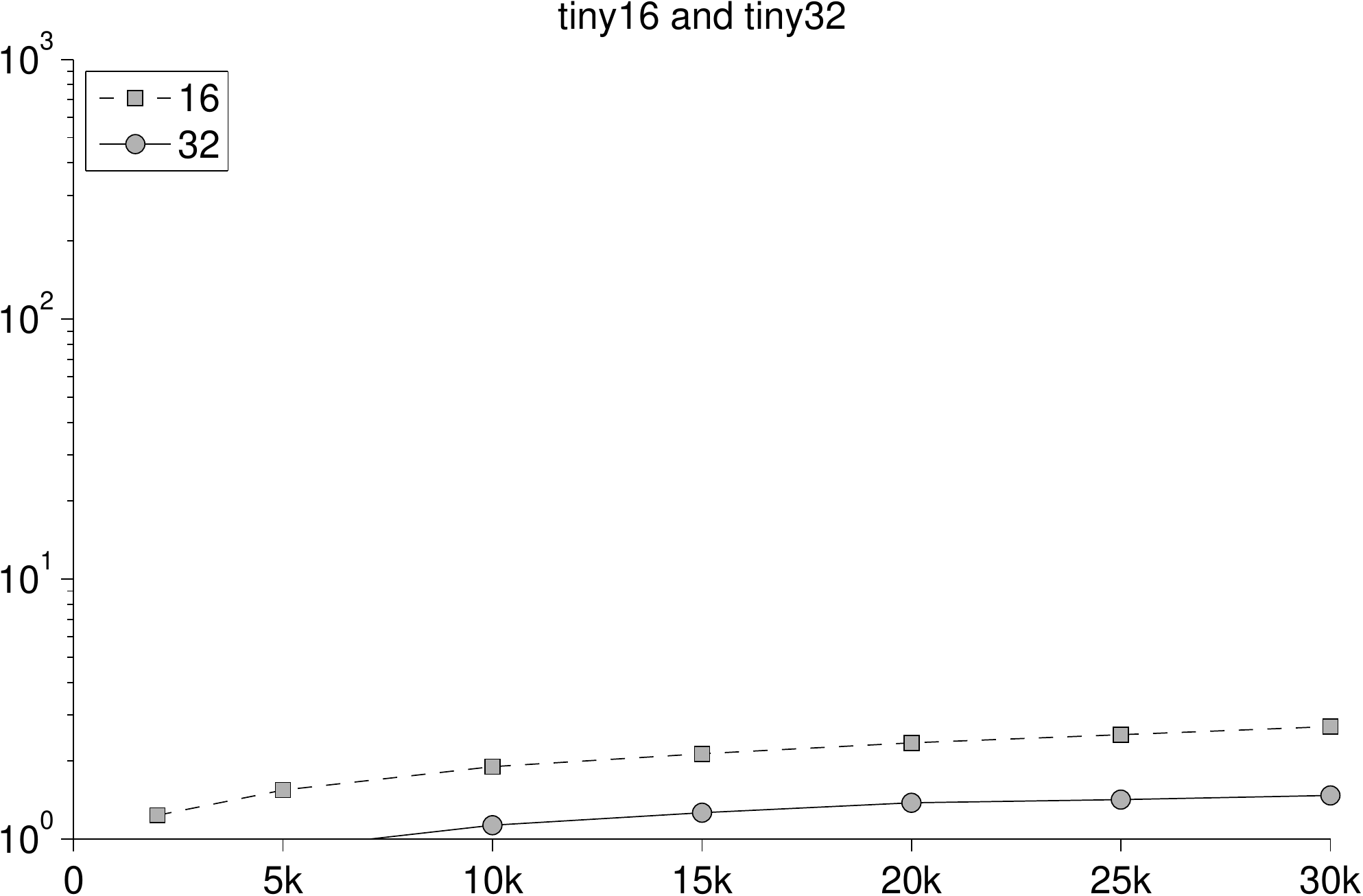} 
\end{tabular} \caption{Results for exact search.  The $y$-axis is a
  logarithmic scale that varies from $10^0$ (no speedup) to $10^3$
  (1000x speedup).  The $x$-axis is the number of representatives
  chosen.} \label{fig:exact} 
\end{figure*}
\section{Proof of Lemma}\label{ap:lem-proof}
We restate Lemma \ref{lem:prune2} and prove it.
\begin{lemma}
Let $R \subset X$ and assign each $x\in X$ to its nearest $r\in R$.
Let $\gamma = \min_{r\in R} \rho(q,r)$ (i.e., $\gamma$ is the distance
to $q$'s NN in $R$).  Then, if some $r^*\in R$ owns the nearest neighbor
to $q$ in $X$, it must satisfy
\begin{align*}
\rho(q,r^*) \leq 3\gamma.
\end{align*}
\end{lemma}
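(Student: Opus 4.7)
My plan is to chase the triangle inequality twice, using the two defining properties of the setup: (i) $r_q$, the nearest representative to $q$, lies in $X$ itself, so it competes to be $q$'s NN; and (ii) $r^*$ is the nearest representative to $x^*$, so it competes with every other $r \in R$, including $r_q$.

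Let $x^*$ denote $q$'s nearest neighbor in $X$ and let $r_q \in R$ achieve $\rho(q, r_q) = \gamma$. The first step is to bound $\rho(q, x^*)$. Since $R \subset X$, the representative $r_q$ is itself a candidate NN for $q$, so $\rho(q, x^*) \le \rho(q, r_q) = \gamma$. The second step is to bound $\rho(x^*, r^*)$. Because $r^*$ owns $x^*$, it is by definition the closest representative to $x^*$, and so $\rho(x^*, r^*) \le \rho(x^*, r_q)$; applying the triangle inequality to the right-hand side and plugging in the first step gives $\rho(x^*, r^*) \le \rho(x^*, q) + \rho(q, r_q) \le \gamma + \gamma = 2\gamma$. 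The third and final step is one more triangle inequality: $\rho(q, r^*) \le \rho(q, x^*) + \rho(x^*, r^*) \le \gamma + 2\gamma = 3\gamma$.

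There is no real obstacle here; the only thing to keep straight is that $r_q$ plays a double role (it is both a representative \emph{and} a database point), which is exactly why one gets $\rho(q, x^*) \le \gamma$ for free. The constant $3$ is tight in the worst case, as one can see by a configuration where $x^*$ sits essentially diametrically opposite to $r_q$ from $q$.
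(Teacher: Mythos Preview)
Your proof is correct and follows essentially the same route as the paper's: bound $\rho(q,x^*)\le\gamma$ using $R\subset X$, then bound $\rho(x^*,r^*)\le 2\gamma$ via the triangle inequality through $r_q$ together with the ownership condition, and finish with one more triangle inequality. The only addition is your remark on tightness of the constant $3$, which the paper does not discuss.
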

\begin{proof}
Suppose that $x$ is $q$'s NN in $X$---i.e. $\rho(q,x) \leq \rho(q,y)$
for all $y\in X$---and that $r^*$ owns $x$ ($r^*$ is $x$'s NN among
$R$).  Furthermore, let $r$ be $q$'s NN in $R$.  Since $R\subset X$ and
$\rho(q,r) = \gamma$, $\gamma$ gives an upper bound on the distance to
$q$'s NN; hence $\rho(q,x) \leq \gamma$.  Using this bound along
with the triangle inequality gives $\rho(x,r) \leq 2\gamma$, but since
$\rho(x,r^*) \leq \rho(x,r)$, we have $\rho(x,r^*) \leq 2\gamma$ as
well.  Applying the triangle inequality to the bounds on $\rho(x,r^*)$
and $\rho(q,x)$ yields the lemma.  
\end{proof}

\section{Proof of the one-shot theorem}\label{ap:one-shot}
We will use the following lemma, which is well-known. We include a
proof for completeness.  
\begin{lemma}  Suppose that $\rho(q,r) = \gamma$.  Then 
\begin{align*}
  B(q,\gamma) \subset B(r,2\gamma) \subset B(q,4\gamma).
\end{align*} \label{lem:sandwich}
\end{lemma}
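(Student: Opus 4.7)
The plan is to prove each of the two set inclusions separately, and both follow from a single application of the triangle inequality. This is essentially a ``sandwiching'' lemma, so the only tool in play is the metric axioms plus the hypothesis $\rho(q,r) = \gamma$; there is no randomness or intrinsic-dimension content here.

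For the first inclusion $B(q,\gamma) \subset B(r,2\gamma)$, I would take an arbitrary $y \in B(q,\gamma)$, so that $\rho(q,y) \leq \gamma$ by definition. Then the triangle inequality gives
\begin{align*}
\rho(r,y) \;\leq\; \rho(r,q) + \rho(q,y) \;\leq\; \gamma + \gamma \;=\; 2\gamma,
\end{align*}
which means $y \in B(r,2\gamma)$.

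For the second inclusion $B(r,2\gamma) \subset B(q,4\gamma)$, I would take an arbitrary $y \in B(r,2\gamma)$, so that $\rho(r,y) \leq 2\gamma$, and once more invoke the triangle inequality:
\begin{align*}
\rho(q,y) \;\leq\; \rho(q,r) + \rho(r,y) \;\leq\; \gamma + 2\gamma \;=\; 3\gamma \;\leq\; 4\gamma,
\end{align*}
so $y \in B(q,4\gamma)$.

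There is really no main obstacle in this proof; the statement is an elementary consequence of the triangle inequality and is only slightly loose (the tighter right-hand inclusion would be $B(q,3\gamma)$, but $4\gamma$ is presumably chosen because that is the radius the one-shot analysis will actually need). The only thing to watch out for is to apply the inequality in the correct direction and not conflate $\rho(q,r)$ with $\rho(r,q)$ (which, being a metric, are equal anyway).
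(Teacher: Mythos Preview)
Your proof is correct and matches the paper's argument essentially line for line: both inclusions are shown by a single triangle-inequality step, and the paper likewise obtains $3\gamma$ for the second inclusion before relaxing to $4\gamma$. Your remark that the $4\gamma$ bound is deliberately loose for the downstream one-shot analysis is also on target.
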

\begin{proof}
Let $x\in  B(q,\gamma)$.  Then $\rho(x,r) \leq \rho(x,q) + \rho(q,r)
\leq 2\gamma$, proving the first inequality.  For the second, let $x
\in B(r,2\gamma)$.  Then $\rho(x,q) \leq \rho(x,r) + \rho(r,q) \leq
2\gamma + \gamma \leq 4\gamma$.
\end{proof}

Now we restate and prove Theorem \ref{thm:one-shot}.
\begin{theorem}
Set the parameters
\begin{align*}
n_r\;=\;s\;=\;c\sqrt{n}\cdot\sqrt{\ln\frac{1}{\delta}}.
\end{align*}
Then the one-shot algorithm returns the correct NN with probability at
least $1-\delta$.  
\end{theorem}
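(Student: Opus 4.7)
The plan is to upper-bound the probability that the true NN $x^*$ of $q$ in $X$ fails to land in the ownership list $L_r$ of $q$'s nearest representative. First I would fix a query $q$, let $x^*$ be its NN in $X$, let $r$ be its NN in the random sample $R$, and set $\gamma=\rho(q,r)$. Since $r\in X$, $\gamma$ is an upper bound on $\rho(q,x^*)$, so the triangle inequality gives $\rho(r,x^*)\leq \rho(r,q)+\rho(q,x^*)\leq 2\gamma$, placing $x^*\in B(r,2\gamma)$. (The case $x^*\in R$ just forces $r=x^*$ and is automatically a success.)

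Next, because $L_r$ holds the $s$ database points closest to $r$, the algorithm succeeds whenever $|B(r,2\gamma)|\leq s$. I would bound this ball by routing through $q$: the triangle inequality gives $B(r,2\gamma)\subset B(q,3\gamma)\subset B(q,4\gamma)$, and two applications of the expansion-rate condition at the deterministic center $q$ yield $|B(q,4\gamma)|\leq c^{2}\,|B(q,\gamma)|$. Hence success is guaranteed as soon as the random quantity $T:=|B(q,\gamma)|$ satisfies $T\leq s/c^{2}$.

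The remaining step is a tail bound on $T$, carried out exactly as in the proof of Claim \ref{cl:numPts}: order $X$ by increasing distance to $q$ and observe that $T$ counts the leading points in this order until the first representative, so $T$ is dominated by a geometric random variable with success probability $p=n_r/n$. Thus $P[T>t]\leq (1-p)^{t}\leq e^{-pt}$. Taking $t=s/c^{2}$ gives failure probability at most $\exp(-s\,n_r/(c^{2}\,n))$. Plugging in the prescribed $n_r=s=c\sqrt{n\ln(1/\delta)}$ makes $s\cdot n_r = c^{2}\,n\ln(1/\delta)$, so the exponent equals $-\ln(1/\delta)$ and the failure probability is at most $\delta$.

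The one place I would be careful at is the expansion-rate step, and this is what I expect to be the main obstacle: the hypothesis is stated for balls around \emph{any} point of the space, but a naive reading would try to apply it at the random center $r$ with the random radius $2\gamma$, entangling the analysis with the randomness in $R$. The clean fix is to contain $B(r,2\gamma)$ inside a ball at $q$ via the triangle inequality and only then invoke doubling, so that every use of the expansion rate is at the fixed, deterministic center $q$ and commutes cleanly with the expectation over $R$. Once that bookkeeping is in place, the rest is a routine geometric-tail calculation together with the arithmetic identity $s\cdot n_r = c^{2}n\ln(1/\delta)$.
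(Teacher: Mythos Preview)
Your proof is correct and follows essentially the same route as the paper: both pass from $r$ to $q$ via the triangle inequality, apply the expansion condition twice at $q$ to get $|B(q,4\gamma)|\le c^{2}|B(q,\gamma)|$, and finish with the identical geometric tail bound $(1-n_r/n)^{s/c^{2}}\le e^{-n_r s/(c^{2}n)}$. The only cosmetic difference is that the paper phrases the failure condition as $\gamma>\psi_r/2$ (whence $|B(r,\psi_r)|\subset B(r,2\gamma)$) rather than your equivalent $|B(r,2\gamma)|>s$, and your worry about invoking doubling at the random center $r$ is resolved in the paper exactly as you propose, by containing everything in balls centered at $q$.
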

\begin{proof}
If a query $q$ lies within distance $\psi_r/2$ of a representative
$r$, then its nearest neighbor $x_q$ is guaranteed to be in $L_r$.
This follows simply from the triangle inequality:
\begin{align*}
\rho(r,x_q) &\leq \rho(r,q) + \rho(q,x_q)\\
&\leq \rho(r,q) + \rho(q,r) \quad\mbox{(since $x_q$ is $q$'s NN)}\\
&\leq \psi_r.
\end{align*}
Thus the algorithm fails only if $\rho(q,r) > \psi_r/2$ for all $r$
(and in particular that nearest one).  We bound the probability of
this occurring.

Let $\gamma = \rho(q,r)$, where $r$ is $q$'s NN.  Again, assume that
$\gamma > \psi_r/2$.  We have that $B(r,\psi_r) \subset B(r,
2\gamma)$, and $B(r,2\gamma) \subset B(q,4\gamma)$ by Lemma
\ref{lem:sandwich}.  Hence we can apply the expansion condition to get
that 
\begin{align*}
|B(r,\psi_r)| \leq |B(q,4\gamma)| \leq c^2 |B(q,\gamma)|.
\end{align*}
In particular, $|B(q,\gamma)| \geq 1/c^2 |B(r,\psi_r)|$; of course
$B(r,\psi_r)$ contains the points of $L_r$, which we chose to be
$s$.  Thus there are at least $s/c^2$ points closer to $q$ than $r$.
What is the probability that none of them were chosen as
representatives?

Each point is chosen with probability $n_r/n$, so the probability that
none are chosen is 
\begin{align*}
\left( 1- \frac{n_r}{n}\right)^{s/c^2} \leq e^{-\frac{n_rs}{c^2n}},
\end{align*}
which follows from the inequality $(1-t/r)^r \leq e^{-t}$.  If we plug
in $n_r = s = \sqrt{\eta c^2 n }$, we get a probability of failure
bounded by 
\begin{align*}
e^{-\frac{n_rs}{c^2n}} = e^{-\eta}.
\end{align*}
The theorem follows by setting $\delta = e^{-\eta}$ and rearranging.  
\end{proof}

\section{Exact search experiments}\label{ap:exact}
There is a single parameter to set for the exact search algorithm (the
number of representatives).  Here we report the results over a fairly
wide range of parameters to support the experimental results.  Note
that the search time is relatively stable to this setting.  See Figure
\ref{fig:exact} for the results.

\end{document}